\newcolumntype{?}{!{\vrule width 2pt}}
\newlength{\Oldarrayrulewidth}
\newcommand{\interior}[1]{%
  {\kern0pt#1}^{\mathrm{o}}%
}
\newcommand{\costpar}{r}
\newcommand{\cbar}{\costpar^{\dagger}}
\newcommand{\gX}{\mathcal{X}}
\newtheorem{theorem}{Theorem}
\newtheorem{corollary}{Corollary}
\newtheorem{definition}{Definition}
\newtheorem{lemma}{Lemma}
\newtheorem{proposition}{Proposition}
\NewDocumentEnvironment{example}{ o }{
  \par\noindent\textbf{Example%
    \IfValueT{#1}{\space #1}%
  .}\quad\itshape
}{\par}
\pgfplotsset{compat=1.13}
\renewenvironment{proof}[1][\proofname]{\par
  \pushQED{$\blacksquare$}%
  \normalfont \topsep6\p@\@plus6\p@\relax
  \trivlist
  \item[\hskip\labelsep
        \bfseries
    #1\@addpunct{:}]\ignorespaces
}{%
\hfill \popQED\endtrivlist\@endpefalse
}
\renewcommand{\interior}[1]{\text{int}({#1})}
\crefname{subsection}{subsection}{subsections}
\Crefname{subsection}{Subsection}{Subsections}
\title{The Limits of Limited Commitment\thanks{%
We thank the editor, Alexander Wolitzky, as well as  Giacomo Calzolari, Matteo Escud\'e, Dino Gerardi, Francesco Giovannoni, Aditya Kuvalekar, Ignacio Monz\'on, Bal\'azs Szentes and three anonymous referees  for their insightful comments.}
}
\author{
Jacopo Bizzotto
\thanks{
OsloMet, \url{jacopo.bizzotto@oslomet.no}.}
\and
Toomas Hinnosaar%
\thanks{University of Nottingham and CEPR, \url{toomas@hinnosaar.net}.}
\and
Adrien Vigier%
\thanks{University of Nottingham, \url{adrien.vigier@nottingham.ac.uk}.}
}
\date{\today}
\begin{document}

\maketitle

\begin{abstract} 
We study limited strategic leadership.  A collection of subsets covering the leader's action space determines her commitment opportunities. We characterize the outcomes resulting from all possible commitment structures of this kind. If the commitment structure is an interval partition, then the leader's payoff is bounded by her Stackelberg and Cournot payoffs. Under general commitment structures the leader may obtain a payoff that is less than her lowest Cournot payoff. We apply our results to a textbook duopoly model and elicit the   commitment structures  leading to consumer- and producer-optimal outcomes.
\end{abstract}

\emph{JEL}:  C72, D43, D82

\emph{Keywords}:  commitment, Stackelberg, Cournot

\section{Introduction}

In many strategic settings, a
player's early decisions constrain the actions  she will later be able to take.
The 
Stackelberg leadership model provides  the best-known  example of this kind: one firm (the leader) chooses a quantity first; another firm (the follower) observes the decision of the leader, and then chooses  its own quantity. Numerous
variants of this model are widely used,  in 
industrial organization \citep{tirole_theory_1988}, macroeconomics \citep{kydland_rules_1977,barro_rules_1983},  computer science, operations research, and marketing. Yet, 
by allowing the leader to select a \emph{single} action early on, the Stackelberg model supposes  a very strong 
form of commitment. While 
this ``full-commitment" assumption is 
undoubtedly warranted in a number of settings, 
 many real-world  situations are such that  early decisions restrict---but do not strictly fix---a player's subsequent choice. The term ``limited commitment" captures   the leader's inability to commit to a single  action of her choice.

Limited commitment can manifest itself in many ways.
For instance, an incumbent firm could influence potential entrants by investing in the development of a new product, building a plant in a new country, or announcing a list price while leaving the door open to discounts. We thus propose a model which allows for arbitrary forms of limited commitment. We characterize all outcomes resulting from limited commitment, compare them to the full and no-commitment benchmarks, and thereby evaluate the robustness of predictions based on these benchmarks.

The following example illustrates the power of limited commitment.
There are two firms, an incumbent and an entrant, respectively producing quantities $q_1$ and $q_2$;  production costs are zero, and the inverse demand function is $1-q_1-q_2$. The full-commitment (or Stackelberg) outcome is such that the incumbent produces $1/2$ whereas the entrant produces $1/4$;  the resulting price is $1/4$.  By contrast, the no-commitment (or Cournot) outcome is such that each firm produces $1/3$, leading to  the  higher price   $1/3$. As it turns out, limited commitment may benefit consumers by inducing a price  lower than  in both  of these cases. For instance, suppose that the only  commitment options  of the incumbent are (i) produce at most $2/3$, (ii) produce at least $2/3$.
In that case, the incumbent chooses the second option and ends up producing $2/3$. The entrant produces $1/6$, and the resulting  price is $P=1/6$.

The previous example is a special case of the general framework we study. In our model, there are two periods and two players, a leader and a follower. Players' action spaces are closed intervals and their payoffs are quasi-concave in one's own actions. 
The setting is parametrized by a collection of subsets that cover the leader's action space; we refer to this collection of subsets as the commitment structure. In the first period, the leader selects an element from the  commitment structure. In the second period,  leader and follower simultaneously choose one action each, the leader being restricted to pick an action from the subset that she selected in the first period. The full- and no-commitment benchmarks are special cases of our model: in the former, the  commitment structure  consists of singletons; in the latter, the  commitment structure  includes only one subset---the leader's entire action space.

We characterize the set of outcomes that are part of an equilibrium under some  commitment structure.
We label those outcomes  \emph{plausible}.
As illustrated in the example above, the set of plausible outcomes extends beyond the bounds defined by the full and no-commitment benchmarks.\footnote{That is, players' equilibrium actions need not be bounded by their full and no-commitment counterparts.} 
More strikingly, this remark applies to the players' equilibrium payoffs too. 
In particular, although full commitment always benefits the leader, we show that the leader can be worse off under limited commitment than under no commitment. In fact, in a standard symmetric duopoly setting---where full commitment confers a well-known first-mover advantage---certain forms of limited commitment even generate a second-mover advantage.

We then examine the set of outcomes resulting from ``simple'' commitment structures, namely commitment structures  which divide the leader's action space into non-overlapping intervals. We call these outcomes \emph{simply plausible}. We show that any outcome such that the follower best-responds and guaranteeing the leader at least her highest no-commitment payoff is simply plausible. Furthermore,  any simply-plausible outcome must give the leader at least  her lowest no-commitment payoff. Thus limited commitment always benefits the leader  under simple commitment structures.
However, the set of simply-plausible outcomes might exhibit ``gaps,'' in the sense that certain outcomes giving the leader a payoff between her full and no-commitment payoffs may fail to be simply plausible.\footnote{The function $U$ mapping every action of the leader to the payoff she obtains when the follower best-responds plays a central role in our results. A subset of the leader's action space is called \emph{$U$-monotone} if whenever an action is contained in it so is the upper contour set of that action with respect to $U$. We show that the plausible actions of the leader are $U$-monotone, whereas the simply-plausible ones need not be.}

The final section of our paper presents an application of our results to a textbook duopoly setting. We first characterize all plausible outcomes of that setting, and then describe the consumer surplus and social welfare-maximizing plausible outcomes. We show that, in terms of consumer surplus and total welfare, some forms of limited commitment are always strictly preferable to the traditional benchmarks of full and no commitment.

We contribute to a literature on commitment whose starting point is that economic agents often commit to \emph{subsets} of actions rather than single actions. This literature is divided in two branches. One branch  posits that agents commit to subsets of actions because they are constrained to do so: even if they wanted to, agents would be unable to commit to specific actions. This  includes \citet{spence_entry_1977}, where an incumbent firm faces a prospective entrant and invests in productive capacity during the first period but may choose not to utilize the full capacity during the second period. It also includes  \citet{saloner_cournot_1987}, \citet{admati_joint_1991}, and \citet{romano_endogeneity_2005}, where agents can set a lower bound on the action they will choose but retain until the last period the option to choose any action that is at least as large as this lower bound. The present  paper belongs to this branch of literature; our contribution is to consider \emph{all} possible commitment structures, that is, all collections of subsets that the leader may choose from in the first period.

A second branch of literature proposes that agents commit to subsets of actions because doing so gives them a strategic advantage. In these models, agents could commit to a  single action but typically \emph{choose} to commit to  a subset containing more than one action because doing so enables them to credibly threaten potential deviating players. \cite{bade_bilateral_2009}, \cite{renou_commitment_2009}, and \cite{dutta_dynamic_2016}, embed a strategic-form game into a multi-stage game in which, early on in the game,  players can freely restrict their action spaces.\footnote{
For a multi-stage model of commitment where the base-game itself is an extensive-form game, see \cite{arieli_sequential_2017}.
} 
A player can commit to any action of his choice, but may also choose not to commit at all. This ability to freely choose what to commit to differs from our model and has stark implications: in our setting, if the leader can choose not to commit at all, she can guarantee herself her lowest Cournot payoff, while if she can commit to any action, then she can guarantee herself her Stackelberg payoff. By contrast, we show that if the leader's ability to commit is limited, she will  not only fall short of her Stackelberg payoff but may also fall short of her lowest Cournot payoff.\footnote{Relatedly, \cite{pei_when_2016} studies a setting where a player can restrict the actions of her opponent. \cite{pei_when_2016} shows that when a player has limited commitment ability, i.e., cannot reduce her opponent's action set to a singleton, then it is sometimes strictly optimal not to restrict the opponent's actions at all.}
Notwithstanding, Proposition 1 in
\cite{bade_bilateral_2009} 
constitutes a key building block towards our  characterization of the set of simply plausible outcomes.
Note that 
several other papers study what might be construed  as a form of  limited commitment:
some allow agents to pick specific actions but let them revise these choices later on, either at fixed times \citep{maskin_theory_1988}, stochastically \citep{kamada_revision_2020}, or by incurring various costs \citep{henkel_1.5th_2002,caruana_production_2008}.\footnote{
For a study of more general commitment devices that include commitments contingent on other players' commitments, see \cite{kalai_commitment_2010}.
}

Finally,  our paper belongs to a recent strand of papers that take a base game as given and examine how changing the structure of this game can affect its outcome. For example,  \cite{kamenica_bayesian_2011}, \cite{bergemann_bayes_2016},  and \cite{makris_information_2023}
examine the implications of changing a game's information structure. \cite{nishihara_resolution_1997} and \cite{gallice_co-operation_2019} study instead the effects of changing the order of moves. \cite{salcedo_implementation_2017} and \cite{doval_sequential_2020} allow the structure of the game to change in both of these dimensions.

\section{The Model} \label{sec:model}

\subsection{Setup}

There are two players, a \textit{leader} and a \textit{follower}, with action spaces $\mathcal{X}=[\underline{x}, \overline{x}]$ and $\mathcal{Y}=[\underline{y}, \overline{y}]$, respectively.  A collection $K$ of non-empty subsets of $\mathcal{X}$ covers the leader's action space.\footnote{That is, (i) for all $\mathcal{X}_i \in K$:   $\mathcal{X}_i \subseteq \mathcal{X}$; (ii) for all $x \in \mathcal{X}$: $x \in \mathcal{X}_i$ for some $\mathcal{X}_i \in K$.
}
We refer to $K$ as the \textit{commitment structure} (CST).

There are two periods: in period 1, the leader publicly selects $\mathcal{X}_i \in K$; in period 2, leader and follower simultaneously choose actions $x$ and $y$, with $x$ contained in $\mathcal{X}_i$ and $y$ contained in $\mathcal{Y}$.
The resulting payoffs are $u(x, y)$ for the leader and $v(y,x)$ for the follower, where $u$ and $v$ are continuous. We further assume that $u(x,y)$ is strictly quasi-concave in $x$ for all $y \in \mathcal{Y}$, and that $v(y,x)$ is strictly quasi-concave in $y$ for all $x \in \mathcal{X}$.
This game is denoted by $G(K)$.
Throughout, our focus is on  pure-strategy subgame perfect equilibrium, henceforth simply referred to as \textit{equilibrium}.

\subsection{Definitions and Notation} \label{subsec:notation}

An action pair $(x,y)$ with $x \in \mathcal{X} $ and $y \in \mathcal{Y}$ is referred to as an \textit{outcome}. 
We say that an outcome $(x,y)$ is \textit{plausible} if $(x,y)$ is an equilibrium outcome of $G(K)$ for some commitment structure $K$. An action $x$ is \textit{plausible} if it is part of a plausible outcome $(x,y)$.

Two salient commitment structures play a central role, 
\[
K^S:=\big\{ \{x\}: x \in \mathcal{X} \big\} ~~ \text{and} ~~
K^C:=\big\{ \mathcal{X}\big\};
\]
we refer to these as the Stackelberg and Cournot CSTs, respectively. By extension, the 
equilibrium outcomes of $G(K^S)$ and $G(K^C)$ will be referred to as Stackelberg and Cournot outcomes. The Cournot actions of the leader are the actions of the leader forming part of a Cournot outcome.

A commitment structure $K$ is said to be \textit{simple} if it partitions the leader's action space into intervals. For example, 
the Stackelberg and Cournot CSTs are  simple CSTs. An outcome $(x, y)$ is \textit{simply plausible} if $(x, y)$ is 
an equilibrium outcome of $G(K)$, for some simple commitment structure $K$.

To every action $x$ of the leader corresponds a unique best response of the follower.\footnote{Recall, the follower's action space is compact, and  $v(y,x)$  is strictly quasi-concave in $y$.} We denote this best response  by $R_F(x)$, and let $U(x)$ be the payoff of the leader from taking action $x$ when the follower best-responds to $x$, that is, 
\[
  U(x):=u\big(x, R_F(x)\big). 
\]
A subset $\tilde{\mathcal{X}}\subseteq \mathcal{X}$ is called \emph{$U$-monotone} if $\tilde{x}\in \tilde{\mathcal{X}}$ implies that the upper contour set of $\tilde{x}$ with respect to $U$ is contained in $\tilde{\mathcal{X}}$ as well.\footnote{The  upper contour set of an action  $x$ with respect to $U$ is the set of actions  $\tilde{x}$ such that  $U(\tilde{x}) \geq U(x)$.}

\subsection{Duopoly Example}
\label{subsec:duoexample}

In this subsection, we illustrate the model in a textbook duopoly setting. Leader and follower are two identical firms, each choosing a quantity in $\mathcal{X}=\mathcal{Y}=\big[0,2/(2-\costpar)\big]$.\footnote{Quantities larger than $2/(2-\costpar)$ would lead to negative profits no matter what.} A firm producing quantity $q$  incurs  cost $3q-\costpar q^2/2$, where $\costpar<2$ measures the returns to scale. Firms sell at unit price $4-(x+y)$.  
Letting $u(x,y)$ (respectively, $v(y,x)$) be the profit of the leader (respectively, the follower)  gives  $v(y,x)=u(y,x)$ and
\begin{equation}\label{eqduopolypayoffs}
  u(x,y)=x-xy-\Big(1-\frac{\costpar}{2}\Big)x^2.
\end{equation}

We set for now $\costpar=0$. The (unique) Cournot and Stackelberg actions are then, respectively, $x^C=1/3$ and $x^S=1/2$. 
Let us consider the simple commitment structure
\[
K=\left\{\left[0,\frac{3}{5}\right),\left[\frac{3}{5}, 1\right]\right\}.
\]
\Cref{F:example1} illustrates this example. Any quantity in the interval $[3/5,1]$ is such that, whenever the follower best-responds, the leader benefits from deviating to a smaller quantity. Hence, any equilibrium of $G(K)$ must be such that the leader produces $3/5$ in the corresponding subgame. If instead the leader picks $[0,3/5)$ in the first period, then each firm produces the Cournot quantity. As $U(3/5)>U(x^C)$, the unique equilibrium is such that in period 1 the leader chooses the upper interval. 

\begin{figure}
  \centering
  \includegraphics[trim={6pt 6pt 12pt 0pt},clip,width=0.55\linewidth]{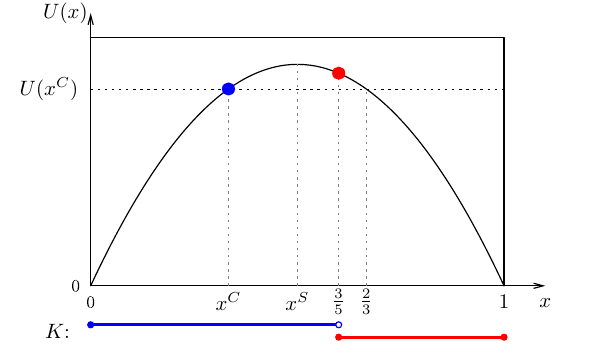}
  \caption{} \label{F:example1}  
\end{figure}

The previous reasoning applies if $3/5$ is replaced by any quantity $x^*$ in the interval $(1/3,2/3]$,  so all actions in $[1/3,2/3]$ are simply plausible. We will see  below that  these are in fact the only simply-plausible actions:
we show in 
 \Cref{sec: mainone} that  any  simply-plausible outcome is such that the
  leader's payoff is  bounded by her Stackelberg
and Cournot payoffs. 
 However, we  show  in \Cref{sec: maintwo} that the set of plausible outcomes may   be larger than the set of simply-plausible ones; moreover, in general,  the Cournot payoffs  do not bound the payoffs that the leader can obtain.

\section{Preliminaries}
Henceforth, 
let
\[
\eta( \tilde{x},x):=u\big(\tilde{x}, R_F(x)\big)- u\big(x, R_F(x)\big). 
\]
In words, $\eta( \tilde{x},x)$ measures the leader's gain from choosing $\tilde{x}$ instead of $x$ when the follower best-responds to $x$. 

Consider an arbitrary commitment structure $K$. Suppose that 
an equilibrium of $G(K)$ exists. Given $\mathcal{X}_i \in K$,  write $\beta(\mathcal{X}_i)$ for the leader's action in the subgame following $\mathcal{X}_i$. Then $\beta(\mathcal{X}_i) \in \mathcal{X}_i$, and $\eta\big(x, \beta(\mathcal{X}_i)\big) \leq 0$ for all $x \in \mathcal{X}_i$. The notion of \textit{admissible pair} summarizes these basic properties.

\begin{definition}
A pair $(K,\beta)$ made up of a commitment structure $K$ and a mapping $\beta: K \rightarrow \mathcal{X}$ is said to be \emph{admissible} if
\begin{enumerate}
\item [(a)] $\beta(\mathcal{X}_i) \in \mathcal{X}_i$, for all $\mathcal{X}_i \in K;$
\item [(b)] $\eta\big(x, \beta(\mathcal{X}_i)\big) \leq 0$,  for all   $x \in \mathcal{X}_i$ and all $\mathcal{X}_i \in K$.
\end{enumerate}
\end{definition}

The following characterization 
 of the set of plausible outcomes is immediate.

\begin{lemma}\label{rem:remark1} 
An outcome $(x, y)$ is plausible if and only if there exist an admissible pair $(K,\beta)$ and $\mathcal{X}_{i}\in K$, such that 
\begin{enumerate}
\item[(i)] $x= \beta(\mathcal{X}_{i} )$,
\item[(ii)] $U(x ) =\max_{\mathcal{X}_j \in K} U\big(\beta(\mathcal{X}_j)  \big)$,
\item[(iii)] $y=R_F(x)$.
\end{enumerate}
\end{lemma}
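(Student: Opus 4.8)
The plan is to prove the two directions of the equivalence by translating subgame perfection into player-by-player incentive constraints: a within-subgame optimality condition for each period-2 subgame, and an across-subgame optimality condition for the leader's period-1 choice.

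For the ``only if'' direction, suppose $(x,y)$ is a subgame perfect equilibrium outcome of $G(K)$ for some $K$. For each $\mathcal{X}_j \in K$, let $\beta(\mathcal{X}_j)$ denote the leader's equilibrium action in the period-2 subgame following the first-period choice $\mathcal{X}_j$; subgame perfection guarantees a Nash equilibrium of every such subgame, so $\beta$ is defined on all of $K$. Since the leader is restricted to $\mathcal{X}_j$, (a) holds. Because period 2 is a simultaneous-move game, in equilibrium the follower plays a best response to the leader, namely $R_F(\beta(\mathcal{X}_j))$ (single-valued by compactness of $\mathcal{Y}$ and $v_{11}<0$), and the leader's action is a best response to it: $u\big(\beta(\mathcal{X}_j),R_F(\beta(\mathcal{X}_j))\big)\ge u\big(x',R_F(\beta(\mathcal{X}_j))\big)$ for all $x'\in\mathcal{X}_j$, which is precisely (b). Hence $(K,\beta)$ is admissible. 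Letting $\mathcal{X}_i$ be the element selected on the equilibrium path, (i) and (iii) follow from $x=\beta(\mathcal{X}_i)$ and the follower's optimality, while (ii) follows because the leader's period-1 choice maximizes the continuation payoff $U(\beta(\mathcal{X}_j))$ over $\mathcal{X}_j\in K$.

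Conversely, given an admissible pair $(K,\beta)$ and an element $\mathcal{X}_i$ satisfying (i)--(iii), I would exhibit an explicit subgame perfect equilibrium. Prescribe that in the subgame following any $\mathcal{X}_j$ the leader plays $\beta(\mathcal{X}_j)$ and the follower plays $R_F(\beta(\mathcal{X}_j))$: by (a) the leader's action is feasible, the follower's action is a best response by construction, and the leader's action is a best response to it exactly because (b) rules out every deviation $x'\in\mathcal{X}_j$, so this profile is a Nash equilibrium of every period-2 subgame. Prescribe further that in period 1 the leader selects $\mathcal{X}_i$; this is optimal by (ii), since the continuation payoff from $\mathcal{X}_j$ equals $U(\beta(\mathcal{X}_j))$. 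The resulting profile is a subgame perfect equilibrium whose outcome is $\big(\beta(\mathcal{X}_i),R_F(\beta(\mathcal{X}_i))\big)=(x,y)$ by (i) and (iii), so $(x,y)$ is plausible.

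The only delicate point is the identification of the Nash equilibria of a period-2 subgame with condition (b): one must check that, once the follower's action is fixed at $R_F(\beta(\mathcal{X}_j))$, the inequalities $\eta\big(x',\beta(\mathcal{X}_j)\big)\le 0$ are equivalent to the leader having no profitable within-subgame deviation, and that single-valuedness of $R_F$ forces $y=R_F(x)$ in any equilibrium. Everything else is bookkeeping: (a) and (b) encode the two players' within-subgame optimality and (ii) encodes the leader's across-subgame optimality, so the correspondence between admissible pairs possessing a maximizing element and subgame perfect equilibria is exact. I do not anticipate a substantive obstacle, since the statement is essentially a formalization of the reasoning already sketched immediately before the definition of admissible pair.
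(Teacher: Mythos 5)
Your proof is correct and follows essentially the same route as the paper's: the ``only if'' direction defines $\beta(\mathcal{X}_j)$ as the leader's continuation action in each subgame exactly as the paper does, and your explicit construction for the converse merely spells out what the paper dismisses as ``immediate.'' No gaps.
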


\noindent
We say that an admissible pair $(K,\beta)$ \emph{implements} outcome $(x, y)$ if it satisfies conditions (i)--(iii) of Lemma \ref{rem:remark1}. We can then rephrase the lemma to say that an outcome $(x, y)$ is plausible if and only if some admissible pair $(K,\beta)$ implements it.

 \section{Simple Commitment Structures } \label{sec: mainone}  
  
This section contains the first part of our  analysis; in it,  we characterize the set of simply-plausible outcomes. All proofs for this section are in  \Cref{A:intervals}.

Denote by $R_L(y)$ the unique best response of the leader to the follower's action $y$, and define\footnote{The leader's action space is compact and $u(x,y)$ is strictly quasi-concave in $x$.
} 
\[
\phi(x):=R_L\big(R_F(x)\big). 
\]
 The fixed points of $\phi$ are thus the Cournot actions of the leader.
 Let $\mathcal{X}^C$ denote said  set of  Cournot actions; the notation $x^C_n$ will indicate a generic element of this set.


The following lemma states a central   property of admissible pairs $(K,\beta)$ such that $K$ is a CST comprising only intervals. The  essence of this  lemma is akin to Proposition 1 in \cite{bade_more_2007}. 

\begin{lemma}\label{lem:extension}
Let $K$ be a  commitment structure comprising only intervals. Then $(K,\beta)$ is admissible if and only if, for all $\mathcal{X}_i\in K$, one of the following conditions holds: 
\begin{enumerate} 
\item [(i)] $\beta(\mathcal{X}_i)\in \mathcal{X}_i \cap \mathcal{X}^C$; 
\item [(ii)] $\beta(\mathcal{X}_i)=\min \mathcal{X}_i$ and $ \phi\big(\beta(\mathcal{X}_i)\big)<\beta(\mathcal{X}_i)$; 
\item [(iii)] $\beta(\mathcal{X}_i)=\max \mathcal{X}_i$ and $\phi\big(\beta(\mathcal{X}_i)\big)>\beta(\mathcal{X}_i)$. 
\end{enumerate}
\end{lemma}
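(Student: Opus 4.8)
The plan is to reduce admissibility to a single-subset, single-peaked optimization problem and then read off the three cases from the location of the unconstrained optimum. Fix $\mathcal{X}_i \in K$ and write $b := \beta(\mathcal{X}_i)$. By the definition of $\eta$, condition (b) of admissibility is equivalent to requiring that $u\big(x, R_F(b)\big) \le u\big(b, R_F(b)\big)$ for every $x \in \mathcal{X}_i$; that is, $b$ maximizes the map $x \mapsto u\big(x, R_F(b)\big)$ over $\mathcal{X}_i$. Because $u_{11}<0$, this map is strictly concave, and its unique maximizer over the whole of $\mathcal{X}$ is, by definition, $R_L\big(R_F(b)\big)=\phi(b)$. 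Since $K$ is simple, Property I guarantees that $\mathcal{X}_i$ is an interval, so the constrained maximizer of a strictly concave function over $\mathcal{X}_i$ is unique and equals the point of $\mathcal{X}_i$ nearest to $\phi(b)$ (the metric projection of $\phi(b)$ onto the interval). Thus $(K,\beta)$ is admissible precisely when, for each $i$, the point $b$ coincides with this projection and lies in $\mathcal{X}_i$.

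Next I would split according to the position of $\phi(b)$ relative to $b$, equivalently relative to the interval $\mathcal{X}_i$. If $\phi(b) \in \mathcal{X}_i$, the projection is $\phi(b)$ itself, so $b = \phi(b)$; then $b$ is a fixed point of $\phi$, i.e.\ $b \in \mathcal{X}_i \cap \mathcal{X}^C$, which is case (i). If $\phi(b)$ lies strictly to the left of $\mathcal{X}_i$, the projection is the left endpoint, forcing $b = \min \mathcal{X}_i$ and $\phi(b) < b$, which is case (ii); the case where $\phi(b)$ lies to the right gives $b = \max \mathcal{X}_i$ and $\phi(b) > b$ symmetrically, which is case (iii). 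The strict inequalities in (ii) and (iii) are exactly what separates these corner solutions from the fixed-point case (i), where $\phi(b)=b$ may itself occur at an endpoint. For the converse, each of (i)--(iii) directly exhibits $b$ as the projection of $\phi(b)$ onto $\mathcal{X}_i$, hence as the constrained maximizer, so condition (b) holds, while condition (a) is immediate in all three cases; since the argument is carried out separately for each $\mathcal{X}_i$, assembling over all $i$ yields the stated equivalence.

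The step I expect to require the most care is the self-referential dependence of $\phi(b)$ on $b$: the relevant objective $u\big(\cdot, R_F(b)\big)$ is itself indexed by the candidate $b$, so one cannot simply project a single fixed Cournot point. The resolution is that the projection characterization is applied pointwise to each candidate $b$, holding $R_F(b)$ fixed while testing optimality, after which the three cases are mutually exhausting. A secondary point to handle cleanly is that elements of $K$ need not be closed intervals: when $\min \mathcal{X}_i$ or $\max \mathcal{X}_i$ fails to exist, the corresponding corner simply admits no maximizer, so that branch is vacuous and remains consistent with the stated conditions, which invoke $\min$ and $\max$ only when a corner solution actually arises. Note finally that Property P plays no role in the per-subset reasoning; it is Property I, ensuring each $\mathcal{X}_i$ is an interval, that drives the entire argument.
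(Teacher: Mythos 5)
Your proof is correct and rests on the same core observation as the paper's: $u(\cdot,R_F(b))$ is strictly concave with peak at $\phi(b)=R_L(R_F(b))$, so over an interval $\mathcal{X}_i$ its maximizer (when it exists) is the point of $\mathcal{X}_i$ closest to $\phi(b)$, which forces exactly the trichotomy (i)--(iii). The paper packages the \emph{only if} direction as an $\varepsilon$-perturbation contradiction (a profitable deviation just inside the interval next to $\beta(\mathcal{X}_i)$) rather than your explicit projection characterization, but the substance is the same, and your handling of the converse and of half-open intervals is consistent with what the paper leaves as ``analogous.''
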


The intuition behind the lemma is straightforward. Consider an interval 
$\mathcal{X}_i$ forming part of a commitment structure $K$, and  a mapping $\beta: K \rightarrow \mathcal{X}$ such that $\beta(\mathcal{X}_i) \in \mathcal{X}_i$ for all $\mathcal{X}_i \in K$. 
Suppose that 
$\phi\big(\beta(\mathcal{X}_i)\big)>\beta(\mathcal{X}_i)$. Then, when the follower best-responds to $\beta(\mathcal{X}_i$), the
leader would like to play an action slightly  greater than $\beta(\mathcal{X}_i)$.
For 
 $(K,\beta)$ to be  admissible  the leader must therefore be  unable to   play such a  greater  action after choosing $\mathcal{X}_i$ in period 1. Since  $\mathcal{X}_i$ is an interval, this amounts to saying that   $\beta(\mathcal{X}_i)$ is  the upper bound of  $\mathcal{X}_i$.

 \Cref{F:fig1}, panel A, illustrates Lemma \ref{lem:extension} in the context of the duopoly example introduced in Subsection \ref{subsec:duoexample}, for  $\costpar=6/5$. The black curve represents the graph of the function $\phi$. 
 The leader's Cournot actions are $x_1^C=0$, $x_2^C=5/9$, and $x_3^C=5/4$. An admissible pair $(K,\beta)$ is such that every action $\beta(\mathcal{X}_i)$ belonging to a region of the figure with a left-pointing arrow (respectively, right-pointing arrow) is either a Cournot action or the leftmost (respectively, rightmost) element of $\mathcal{X}_i$.

\begin{figure}[!htb]
  \centering
  \subfloat[]{\includegraphics[width=0.47\textwidth,trim={12pt 8pt 21pt 8pt},clip]{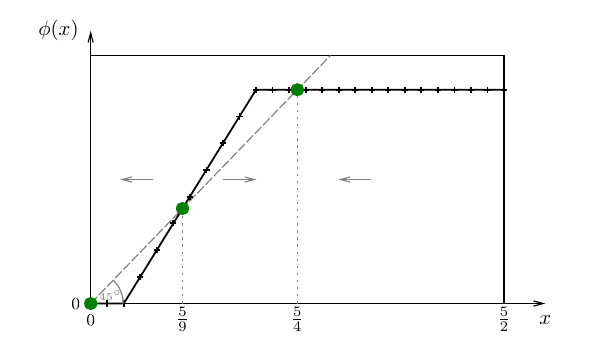}\label{F:phi}}
  \hfill 
  \subfloat[]{\includegraphics[width=0.47\textwidth,trim={12pt 8pt 21pt 8pt},clip]{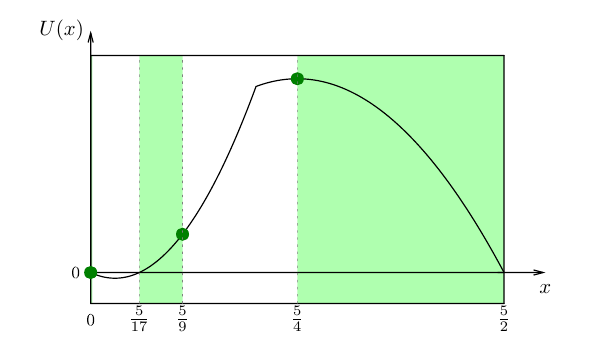}\label{F:U}}
  \caption{}
  \label{F:fig1}
\end{figure}

Our first theorem characterizes the set of simply-plausible outcomes.

\begin{theorem}\label{thm3}
An action $x^*$ is simply plausible if and only if the lower contour set of $x^*$ with respect to $U$ contains  a Cournot action $x_{n^*}^C$ such that
\begin{equation}\label{eq:thm1}
\big(\phi(x^*)-x^*\big)\big(x_{n^*}^C- x^*\big)\geq 0.
\end{equation}
\end{theorem}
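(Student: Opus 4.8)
The plan is to prove both directions, splitting on the sign of $\phi(x^*)-x^*$. When $\phi(x^*)=x^*$, $x^*$ is itself a Cournot action: it is simply plausible through the Cournot CST $K^C=\{\mathcal{X}\}$ with $\beta(\mathcal{X})=x^*$ (admissible by \Cref{lem:extension}(i), and plausible by \Cref{rem:remark1} since there is only one interval), and the criterion holds with $x_{n^*}^C=x^*$, for which the product in \eqref{eq:thm1} vanishes and $x^*$ trivially lies in its own lower contour set. The cases $\phi(x^*)>x^*$ and $\phi(x^*)<x^*$ are mirror images, so I treat the former and note the reflection at the end.

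\emph{Sufficiency.} Given a Cournot action $x_{n^*}^C$ in the lower contour set of $x^*$ with $(\phi(x^*)-x^*)(x_{n^*}^C-x^*)\ge0$, the sign condition together with $\phi(x^*)>x^*$ forces $x_{n^*}^C\ge x^*$, and since $\phi(x^*)>x^*$ excludes $x^*\in\mathcal{X}^C$ we in fact have $x_{n^*}^C>x^*$; moreover $\phi(x^*)\le\overline{x}$ forces $x^*<\overline{x}$. I would then exhibit the simple CST $K=\{[\underline{x},x^*],(x^*,\overline{x}]\}$ with $\beta\big([\underline{x},x^*]\big)=x^*$ and $\beta\big((x^*,\overline{x}]\big)=x_{n^*}^C$. \Cref{lem:extension} certifies admissibility: $x^*=\max[\underline{x},x^*]$ with $\phi(x^*)>x^*$ (case (iii)), and $x_{n^*}^C\in(x^*,\overline{x}]\cap\mathcal{X}^C$ (case (i)). Because $U(x_{n^*}^C)\le U(x^*)$, the maximum in condition (ii) of \Cref{rem:remark1} is attained on the first interval, so $(x^*,R_F(x^*))$ is implemented by a simple CST and $x^*$ is simply plausible.

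\emph{Necessity.} Let $x^*$ be simply plausible, implemented by an admissible pair $(K,\beta)$ and $\mathcal{X}_i$ with $x^*=\beta(\mathcal{X}_i)$; by \Cref{rem:remark1}(ii) every equilibrium action $\beta(\mathcal{X}_j)$ lies in the lower contour set of $x^*$. Assuming $x^*\notin\mathcal{X}^C$, \Cref{lem:extension} and $\phi(x^*)>x^*$ give $x^*=\max\mathcal{X}_i$, so the intervals of $K$ lying to the right of $\mathcal{X}_i$ cover $(x^*,\overline{x}]$. It suffices to find among them an interval whose equilibrium action $b$ is Cournot: then $b>x^*$ and $U(b)\le U(x^*)$, so $x_{n^*}^C:=b$ satisfies \eqref{eq:thm1} strictly. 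I would obtain this by a chaining argument. By strict concavity, a \emph{non}-Cournot equilibrium action must bind at an attained endpoint of its interval, either at the minimum with $\phi<\mathrm{id}$ or at the maximum with $\phi>\mathrm{id}$. The interval abutting $x^*$ from the right cannot bind at its (unattained) left end, where in addition $\phi>\mathrm{id}$; failing a Cournot action it therefore binds at its right endpoint with $\phi>\mathrm{id}$ there, and since adjacent intervals share that endpoint, the inequality $\phi>\mathrm{id}$ passes to the left end of the next interval and the dichotomy recurs. Iterating, the rightmost interval would be forced to bind at $\overline{x}$ with $\phi(\overline{x})>\overline{x}$, contradicting $\phi(\overline{x})\le\overline{x}$; hence some right-hand interval has a Cournot equilibrium action.

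The main obstacle is this propagation step, in two respects. The open/closed bookkeeping should work in my favour, as the endpoint inherited across a shared boundary is never attained by the interval to its right, so left-binding is ruled out automatically and only a Cournot action or a further right-binding step remains possible. The genuine difficulty is a CST with infinitely many intervals accumulating inside $(x^*,\overline{x}]$, where the finite iteration fails; there I would instead set $x_{n^*}^C=\inf\{x>x^*:\phi(x)\le x\}$ and invoke continuity of $\phi$ (inherited from that of $R_F$ and $R_L$) to show that this infimum is a Cournot action and that the accumulated right-binding steps force $\phi(\overline{x})\ge\overline{x}$ in the limit, again delivering a Cournot equilibrium action in the lower contour set. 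The case $\phi(x^*)<x^*$ is identical under reflection, with $\min$ and $\max$ interchanged and the relevant intervals covering the left tail $[\underline{x},x^*)$.
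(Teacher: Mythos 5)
Your sufficiency argument is exactly the paper's: the two-interval CST $\{[\underline{x},x^*],(x^*,\overline{x}]\}$ with $\beta$ sending the left cell to $x^*$ and the right cell to $x_{n^*}^C$, certified by \Cref{lem:extension} and \Cref{rem:remark1}. The skeleton of your necessity argument --- argue by contradiction and chain rightward through the partition using the trichotomy of \Cref{lem:extension}, terminating either at a Cournot equilibrium action in $\mathcal{Q}_{\leq}(x^*)$ or at the absurdity $\phi(\overline{x})>\overline{x}$ --- also matches the first half of the paper's proof. But the step you yourself flag as the ``genuine difficulty'' is where your proof has a real gap, and your proposed repair does not close it.

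Two concrete problems. First, your chain is indexed by \emph{adjacency}, starting from ``the interval abutting $x^*$ from the right.'' No such interval need exist: the cells of a simple CST covering $(x^*,\overline{x}]$ can accumulate at $x^*$ from above (e.g.\ cells $(x^*+\tfrac{1}{n+1},x^*+\tfrac{1}{n}]$), so the chain has no first link; the same accumulation phenomenon can recur at interior points, so even transfinitely the adjacency chain need not reach $\overline{x}$. Second, the repair is unsound: the limit of an infinite run of right-binding steps is the accumulation point $z$ of the right endpoints, not $\overline{x}$, so you do not obtain $\phi(\overline{x})\geq\overline{x}$; and while continuity does give $\phi(z)\geq z$ and $U(z)\leq U(x^*)$, if $\phi(z)>z$ you must restart the chain past $z$ with no guarantee of termination. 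Nor is there any argument that your candidate $\inf\{x>x^*:\phi(x)\leq x\}$ lies in $\mathcal{Q}_{\leq}(x^*)$. The paper's proof avoids all of this by chaining through \emph{successive Cournot actions} rather than adjacent cells: $\mathcal{X}_{k+1}$ is the cell containing the smallest Cournot action above $\beta(\mathcal{X}_k)$. Each such cell contains a Cournot action $x_k^C$ with $U(x_k^C)\geq U(x^*)+\Delta$ for a uniform $\Delta>0$ (this is exactly the negation of the theorem's conclusion) and also contains its own maximum, where $U\leq U(x^*)$; uniform continuity of $U$ then forces every cell in the chain to have length at least some fixed $\eta>0$. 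Since the cells are pairwise disjoint subsets of $(x^*,\overline{x}]$, the chain is finite, contradicting the first half of the argument which shows it cannot terminate. Some quantitative lower bound of this kind on the cells' lengths (or an equivalent device) is the missing ingredient; without it the infinite case is not handled.
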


\Cref{thm3} tells us that an action $x^*$ at which $\phi(x^*)>x^*$ is simply plausible if and only if some Cournot action greater than $x^*$ belongs to the lower contour set of $x^*$ with respect to $U$. The \emph{if} part is easy. Let $x_{n}^C$ be a Cournot action greater than $x^*$ and suppose that it belongs to the lower contour set of $x^*$. Now consider $K=\big\{ [\underline{x}, x^*], (x^*, \overline{x}] \big\}$, and $\beta$ given by $\beta\big( [\underline{x}, x^*] \big)=x^*$ and $\beta\big( (x^*, \overline{x}] \big)=x_{n}^C$.  By  \Cref{lem:extension},
the pair $(K,\beta)$ is admissible; it implements $x^*$ since $U(x^*)\geq U(x_{n}^C)$.\footnote{In particular, it follows  that every simply-plausible outcome is an equilibrium outcome of a game $G(K)$ for some simple commitment structure $K$ comprising   two elements only.}


The proof of the \textsl{only if} part  goes as follows.
Pick a simply plausible action $x^*$ such that $\phi(x^*)>x^*$. Reason  by contradiction, and assume that all Cournot actions of the leader greater than $x^*$ are in the strict upper contour set of $x^*$. We first show that some Cournot action of the leader has to be greater than $x^*$, and let $x_1^C$ denote the smallest such action. 
Note that  $x_1^C$ cannot be an equilibrium action in the corresponding subgame; something else is, say $x_1$. 
We argue that $x_1$ must be greater than $x_1^C$ and such that $\phi(x_1)>x_1$.
We now repeat the argument above with $x_1$ in place of $x^*$, and so on.
 We show that  the previous recursion must end after finitely many steps, say $n$, for otherwise 
 the fact that the action spaces are compact and the  payoff functions are continuous leads to a contradiction. Yet $\phi(x_n)>x_n$ implies $x_n<\overline{x}$, giving another  contradiction.

Applying \Cref{thm3} to the example of 
\Cref{F:fig1} shows that the set of simply-plausible actions is equal to $ \{0\} \cup \big[ 5/17, 5/9 \big] \cup \big[5/4,5/2 \big]$ (illustrated in green in panel B of \Cref{F:fig1}). Firstly, \Cref{thm3} shows that no action in the interval $\big(0,5/17 \big)$ is simply plausible, since all of them belong to the strict lower contour set of each Cournot action. Secondly, any $x \in \big(5/9,5/4\big)$ satisfies $\phi(x) >x$ (see panel A). The only Cournot action greater than any of these actions is $x_3^C$. As $U(x_3^C)>U(x)$ for all $x \in \big(5/9, 5/4 \big)$, we conclude using \Cref{thm3} that no action in this interval is simply plausible. Mirror arguments show that all actions in $ \{0\} \cup \big[ 5/17, 5/9 \big] \cup \big[5/4, 5/2 \big]$ are simply plausible.\footnote{An action $x^*\in [ 5/17, 5/9] \cup [5/4,5/2]$ is implemented, for instance, by the pair $(K,\beta)$ where $K=\left\{[0,x^*),[x^*,5/2]\right\}$, $\beta\big([0,x^*)\big)=0$, and $\beta\big([x^*,5/2]\big)=x^*$.}

By construction, the leader's Stackelberg payoff provides an upper bound for the payoffs attainable by the leader under any CST. \Cref{thm3} shows that the Cournot payoffs provide a corresponding lower bound
for simple CSTs. Moreover, 
\Cref{thm3} implies that 
 if an action is in the upper contour set of \emph{all} Cournot actions, then that action must be simply plausible. The following corollary records these observations.

\begin{corollary}\label{cor:UCS_LCS}
All simply-plausible actions belong to the upper contour set of a Cournot action with respect to $U$. Furthermore, any action in the intersection of these upper contour sets is simply plausible. 
\end{corollary}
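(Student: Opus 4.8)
The plan is to derive both assertions directly from \Cref{thm3}, after translating its lower-contour-set hypothesis into the language of upper contour sets. The key dictionary entry is that the lower contour set of $x^*$ contains a Cournot action $x_{n^*}^C$ precisely when $U(x_{n^*}^C)\le U(x^*)$, which is the same as saying that $x^*$ lies in the upper contour set of $x_{n^*}^C$.

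For the first sentence, suppose $x^*$ is simply plausible. By \Cref{thm3} there is a Cournot action $x_{n^*}^C$ with $U(x_{n^*}^C)\le U(x^*)$ (and satisfying the sign condition \eqref{eq:thm1}, which we do not even need here). The inequality $U(x_{n^*}^C)\le U(x^*)$ says exactly that $x^*$ belongs to the upper contour set of $x_{n^*}^C$, which is the claim.

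For the second sentence, fix $x^*$ with $U(x^*)\ge U(x_n^C)$ for every Cournot action $x_n^C$; equivalently, every Cournot action lies in the lower contour set of $x^*$. It then suffices, by \Cref{thm3}, to exhibit a single Cournot action satisfying the sign condition \eqref{eq:thm1}, since the lower-contour-set requirement is automatically met by any Cournot action. I would split into cases on the sign of $\phi(x^*)-x^*$. If $\phi(x^*)=x^*$ then $x^*$ is itself a Cournot action and \eqref{eq:thm1} holds trivially. If $\phi(x^*)>x^*$, I would locate a Cournot action to the right of $x^*$: since $\phi$ maps $\mathcal{X}$ into itself we have $\phi(\overline{x})\le \overline{x}$, so the continuous function $g(x):=\phi(x)-x$ satisfies $g(x^*)>0\ge g(\overline{x})$, and by the intermediate value theorem $g$ vanishes at some $x_{n^*}^C\in[x^*,\overline{x}]$; this is a Cournot action with $x_{n^*}^C\ge x^*$, so $\big(\phi(x^*)-x^*\big)\big(x_{n^*}^C-x^*\big)\ge 0$. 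The case $\phi(x^*)<x^*$ is symmetric, using $\phi(\underline{x})\ge \underline{x}$ to find a Cournot action in $[\underline{x},x^*]$. In every case \Cref{thm3} then yields simple plausibility.

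The only substantive step is the intermediate-value argument producing a Cournot action on the correct side of $x^*$; it relies on the continuity of $\phi=R_L\circ R_F$ and on the boundary inequalities $\phi(\underline{x})\ge \underline{x}$ and $\phi(\overline{x})\le \overline{x}$, both of which hold simply because $\phi$ takes values in $\mathcal{X}$. I expect no real difficulty beyond this; everything else is a direct rereading of \Cref{thm3}.
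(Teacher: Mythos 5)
Your derivation is correct and matches the paper's intent: the paper states the corollary as an immediate record of \Cref{thm3}, and your only substantive step---the intermediate-value argument locating a Cournot fixed point of $\phi$ on the correct side of $x^*$---is exactly the device the paper itself uses inside the proof of \Cref{thm3} (from $\phi(x^*)>x^*$ and $\phi(\overline{x})\le\overline{x}$). No gaps.
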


\section{Beyond Simple  Commitment Structures} \label{sec: maintwo} 

We saw in the previous section that all simply-plausible outcomes guarantee the leader at least her lowest Cournot payoff. The following examples show that the conditions imposed on simple CSTs are crucial: in general, an outcome may be plausible and give the leader a payoff that is less than her minimum Cournot payoff.

\begin{example}[A]
Consider the duopoly example introduced in Subsection \ref{subsec:duoexample}, with  $\costpar=4/5$
and commitment structure 
\[
K=\left\{\left(\frac{1}{8},\frac{1}{3}\right],\left[0,\frac{1}{8}\right]\cup\left(\frac{1}{3}, \frac{5}{3}\right]\right\}.
\]
As one of its elements is not an interval, $K$ is not a simple CST.
\Cref{F:example2} illustrates this example. The subgame following the leader's choice of $(1/8,1/3]$ possesses a unique equilibrium, in which the leader produces $1/3$. The other subgame has two equilibria: one yielding the Cournot outcome,  $x^C=5/11$, the other involving the leader choosing quantity $1/8$. As $U(1/8)<U(1/3)<U(x^C)$, we see that $G(K)$ possesses two 
equilibria: one in which the leader produces $1/3$, and one in which the leader produces $x^C$. In the former equilibrium, the leader anticipates that if she were to select $\left[0,1/8\right]\cup\left(1/3, 5/3\right]$ in period 1, the follower would respond by producing a quantity larger than $x^C$. Consequently, the leader settles for the quantity $1/3$, and obtains less than the Cournot payoff $U(x^C)$.

\begin{figure}
	\centering
	\includegraphics[trim={12pt 0pt 21pt 0pt},clip,width=0.55\linewidth]{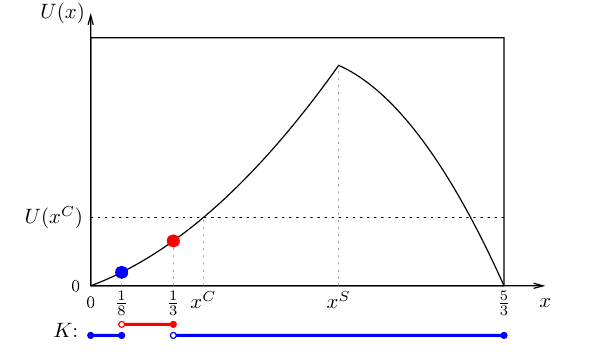}
	\caption{ } \label{F:example2}
\end{figure}
\end{example}

\begin{example}[B]
Consider the following setting.
The action spaces are $\mathcal{X}=\mathcal{Y}=[0,1]$. The payoffs of the leader are given by
\[
 u(x,y)
 = x y + (1-x)(1-y) - \frac{1}{2} \left(x-\frac{1}{2} \right)^2 - \frac{3}{2} \left(y-\frac{1}{2} \right)^2;
\]
the payoffs of the follower are given by $v(y,x)=u(y,x)$. 
In this setting, 
the leader's Stackelberg actions are $0$, $1/2$, and $1$, and  these are also the leader's Cournot actions.
The  commitment structure is 
\[
K=\big\{ [0,x^*], [1-x^*,1] \big\}, 
\]
where $x^*$ denotes some action in $(1/2,1)$.
As $x^*>1-x^*$, $K$ does not partition the leader's action space; so $K$ is not a simple CST. \Cref{F:example3} illustrates this example. The subgame induced by the leader's choice of $ [0,x^*]$ has an equilibrium in which the leader chooses the action  $x^*$. Symmetrically, the subgame induced by the leader's choice of $ [1-x^*,1]$ has an equilibrium in which the leader chooses the action $1-x^*$. Therefore, since $U(x^*)=U(1-x^*)$, both $x^*$ and $1-x^*$ are plausible. In both cases, the leader obtains less than her Cournot payoff $U(1/2)$.

\begin{figure}
\centering
\includegraphics[trim={12pt 0pt 21pt 0pt},clip,width=0.55\linewidth]{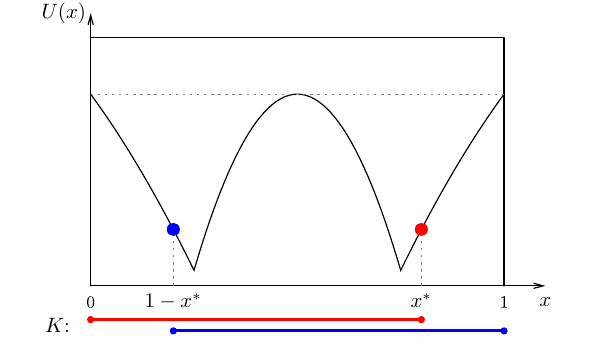}
\caption{} \label{F:example3}
\end{figure}    
\end{example}

A simple CST has two defining properties: it partitions the leader's action space and consists solely of intervals. 
The previous examples illustrate that relaxing either of these conditions can expand the set of outcomes that are plausible. In particular, relaxing either of these conditions can induce the leader to obtain less than her minimum Cournot payoff. \Cref{subsecIplaus} characterizes the set of outcomes induced by CSTs comprising only intervals. CSTs comprising non-convex sets are examined in \Cref{subsecPplaus}: we characterize the set of outcomes induced by CSTs which partition the leader's action space and show that every plausible outcome is plausible under such a  CST. 

\subsection{Commitment Structures Comprising Only Intervals } \label{subsecIplaus}  
  
We say that an outcome $(x,y)$ is \textit{I-plausible} if it is 
an equilibrium outcome of $G(K)$ for some commitment structure $K$ comprising only intervals. Every simply-plausible outcome is I-plausible. However, as  Example B illustrated, an outcome may be I-plausible even though it is not simply plausible. The following theorem characterizes the set of I-plausible outcomes.\footnote{Note that in case $x^*$ is simply plausible, \Cref{thm3} tells us that the set of 
 Cournot actions 
 in the lower contour set of $x^*$ is non-empty;
 let $x_n^C$ be such an action.  As $x_n^C=\phi(x_n^C)$ choosing  $x'=x''=x_n^C$  immediately provides two actions satisfying the conditions of   \Cref{thm:int_covers}.} All proofs for this subsection are in \Cref{A:subsecIplaus}.

\begin{theorem}\label{thm:int_covers} The set of  I-plausible actions is $U$-monotone.
An action $x^*$ is I-plausible if and only if the lower contour set of $x^*$ with respect to $U$ includes  actions $x'$ and $x''$ such that 
\begin{equation}\label{eqzaneetti1}
\phi(x')\leq x'\leq x''\leq \phi(x'').
\end{equation}
\end{theorem}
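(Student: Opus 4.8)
The plan is to work throughout with the reformulation of \Cref{rem:remark1}: an action $x^*$ is I-plausible precisely when there is an admissible pair $(K,\beta)$ with $K$ a cover of $\mathcal{X}$ by intervals, an element $\mathcal{X}_{i^*}$ with $\beta(\mathcal{X}_{i^*})=x^*$, and $U(\beta(\mathcal{X}_j))\le U(x^*)$ for every $\mathcal{X}_j\in K$. The first step is to note that condition (b) of admissibility is verified element by element and never invokes Property P; hence the trichotomy behind \Cref{lem:extension} applies to any single interval $\mathcal{X}_i$ in isolation. Concretely, since $u(\cdot,R_F(\beta))$ is strictly concave with unique maximizer $\phi(\beta)$, for $\beta=\beta(\mathcal{X}_i)\in\mathcal{X}_i$ condition (b) holds iff either $\beta\in\mathcal{X}^C$, or $\beta=\min\mathcal{X}_i$ with $\phi(\beta)<\beta$, or $\beta=\max\mathcal{X}_i$ with $\phi(\beta)>\beta$. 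In particular every action is admissibly playable on some interval, so the real content of the theorem is whether the \emph{remainder} of $\mathcal{X}$ can be covered by intervals whose equilibrium actions have payoff at most $U(x^*)$.

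For the \emph{if} direction, suppose $x',x''$ lie in the lower contour set of $x^*$ and satisfy $\phi(x')\le x'\le x''\le\phi(x'')$. I would take $K=\{A,B,I^*\}$ with $A=[\underline{x},x'']$, $B=[x',\overline{x}]$, and $I^*$ any interval on which $x^*$ is admissibly playable (for instance $[\underline{x},x^*]$ when $\phi(x^*)\ge x^*$ and $[x^*,\overline{x}]$ otherwise). Setting $\beta(A)=x''$, $\beta(B)=x'$, and $\beta(I^*)=x^*$, the trichotomy makes each action admissible: $x''=\max A$ with $\phi(x'')\ge x''$ and $x'=\min B$ with $\phi(x')\le x'$, the equality cases being the Cournot ones. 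Because $x'\le x''$, already $A\cup B=\mathcal{X}$, so $K$ is a valid cover; and since $U(x'),U(x'')\le U(x^*)$, the action $x^*$ attains $\max_j U(\beta(\mathcal{X}_j))$. \Cref{rem:remark1} then gives I-plausibility.

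For the \emph{only if} direction I would argue by contradiction. Write $L$ for the lower contour set of $x^*$, and split it into $L^+=\{x\in L:\phi(x)\ge x\}$ and $L^-=\{x\in L:\phi(x)\le x\}$. If no admissible $x',x''$ exist, then first $L$ contains no Cournot action (such a point would serve as $x'=x''$), so $L^+$ and $L^-$ partition $L$; and the failure of the ordering condition forces every point of $L^-$ to lie strictly to the right of every point of $L^+$. Now fix any implementation $(K,\beta)$ of $x^*$. By the trichotomy each $\beta(\mathcal{X}_j)\in L$ is non-Cournot, hence is either a \emph{left} endpoint lying in $L^-$ (the right cluster) or a \emph{right} endpoint lying in $L^+$ (the left cluster). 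Both clusters are nonempty: otherwise the interval covering $\overline{x}$ (resp.\ $\underline{x}$) would admit no valid action, since $\phi(\overline{x})\le\overline{x}$ and $\phi(\underline{x})\ge\underline{x}$. Setting $\mu=\sup L^+\le\inf L^-=\nu$, one rules out $\mu=\nu$ using closedness of $L$ and continuity of $\phi$, so any $z\in(\mu,\nu)$ satisfies $z\notin L$ yet lies in some $\mathcal{X}_j=[a_j,b_j]$; if $\beta(\mathcal{X}_j)=b_j\in L^+$ then $z\le b_j\le\mu<z$, while if $\beta(\mathcal{X}_j)=a_j\in L^-$ then $z\ge a_j\ge\nu>z$, a contradiction either way.

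The main obstacle is this \emph{only if} direction, and specifically the orientation of the two clusters. The naive idea of reading $x'$ and $x''$ off the intervals covering the two endpoints of $\mathcal{X}$ produces a $\phi$-increasing point on the left and a $\phi$-decreasing point on the right, i.e.\ exactly the wrong inequality $x''<x'$; the crux is to realize that one must instead locate a high-payoff point \emph{trapped} in the gap between the clusters and derive the contradiction from the covering. The remaining delicate point is the boundary case $\mu=\nu$, which is excluded by combining closedness of the lower contour set with continuity of $\phi$ at the common value.
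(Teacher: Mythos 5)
Your proposal is correct and follows essentially the same route as the paper: the identical three-interval construction for the \emph{if} direction, and for the \emph{only if} direction the same use of \Cref{lem:extension} to show that an interval covering a point in the gap between $\sup L^+$ and $\inf L^-$ (the paper's $x^\dagger$) cannot carry an equilibrium action lying in the lower contour set. The only cosmetic difference is that the paper treats the emptiness of $L^+$ or $L^-$ as two separate preliminary cases, handled via the intervals containing $\underline{x}$ and $\overline{x}$, exactly as in your nonemptiness step.
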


Suppose that  $x$ is I-plausible, and let $(K, \beta)$ be an admissible pair that implements $x$ in which the commitment structure  $K$ contains only intervals. Pick $x'$ in the upper contour set of $x$ with respect to $U$, and consider $K':=K \cup \big\{  \{x'\}\big\}$, and $\beta': K' \rightarrow \mathcal{X}$ such that $\beta'(\mathcal{X}_i)= \beta(\mathcal{X}_i)$ for all $\mathcal{X}_i \in K$ and $\beta'( \{x'\})=x'$. The commitment structure $K'$ contains only intervals. Furthermore, the pair $(K', \beta')$ is evidently  admissible and implements $x'$. We conclude that the I-plausible actions are $U$-monotone.

The \textit{if} part of the theorem is straightforward. Let $x^*$ be such that the lower contour set of $x^*$ with respect to $U$ includes  actions $x'$ and $x''$ satisfying \eqref{eqzaneetti1}. Consider $K=\big\{\{x^*\},[\underline{x},x''],[x',\overline{x}]\big\}$, and $\beta$ given by $\beta\big(\{x^*\}\big)=x^*$, $\beta\big([\underline{x},x'']\big)=x''$, and $\beta\big([x',\overline{x}]\big)=x'$. By \Cref{lem:extension}, the pair $(K,\beta)$ is admissible; it  implements $x^*$ since $U(x^*)\geq \max \{U(x'), U(x'')\}$.\footnote{In particular, it follows   that every I-plausible outcome is an equilibrium outcome of a game $G(K)$ for some commitment structure $K$ comprising at most three elements.}

The proof of the  \textit{only if} part of the theorem 
first establishes that if $x^*$ is I-plausible then some action  $x$ in the lower contour set of $x^*$ must satisfy  $\phi(x)\leq x$, and some action $x$ in the lower contour set of $x^*$ must satisfy  $\phi(x)\geq x$. Next, 
suppose that every action $x$ in the lower contour set of $x^*$ for which 
$\phi(x) \geq x$ is strictly smaller than every action $x$ in the lower contour set of $x^*$ for which $\phi(x) \leq x$, and pick $x^\dagger$ in between these two subsets of actions. Now let $(K, \beta)$  be an admissible pair such that $K$ is a CST comprising only intervals. By \Cref{lem:extension}, if $\mathcal{X}_i \in K$ is an interval containing $x^\dagger$ then either $\beta(\mathcal{X}_i) \geq x^\dagger$  and $\phi\big( \beta(\mathcal{X}_i) \big)\geq \beta(\mathcal{X}_i)$, or $\beta(\mathcal{X}_i) \leq x^\dagger$  and $\phi\big( \beta(\mathcal{X}_i) \big)\leq \beta(\mathcal{X}_i)$. Yet every action $x$ in the lower contour set of $x^*$ for which $\phi(x)\geq  x$ is strictly smaller than $x^\dagger$, while every action $x$ in the lower contour set of $x^*$ for which $\phi(x)\leq  x$ is strictly greater than $x^\dagger$. We conclude that $\beta(\mathcal{X}_i)$ must belong to the strict upper contour set of $x^*$. So $(K, \beta)$ cannot implement  $x^*$.

Note that 
if a Cournot action $x_{n}^C$ belongs to the lower contour set of an  action $x^*$, then setting $x'=x''= x_{n}^C$ in \Cref{thm:int_covers} proves  that $x^*$ is I-plausible. We thus obtain the following corollary:

\begin{corollary} \label{corocoussinmoel}
All actions in the  upper contour set of a Cournot action with respect to $U$ are I-plausible.
\end{corollary}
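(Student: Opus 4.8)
The goal is to prove Corollary~\ref{corocoussinmoel}: every action $x^*$ lying in the upper contour set of some Cournot action is I-plausible. The plan is to deduce this directly from the \textit{if} direction of \Cref{thm:int_covers}. That theorem tells me $x^*$ is I-plausible as soon as I can exhibit two actions $x'$ and $x''$ in the lower contour set of $x^*$ (with respect to $U$) satisfying $\phi(x')\leq x'\leq x''\leq \phi(x'')$. So the entire task reduces to producing such a pair, and the natural candidate is to take $x'=x''$ equal to the relevant Cournot action itself.

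Concretely, suppose $U(x_n^C)\leq U(x^*)$ for some Cournot action $x_n^C\in \mathcal{X}^C$; this is precisely what it means for $x^*$ to lie in the upper contour set of $x_n^C$. Recall that the Cournot actions are exactly the fixed points of $\phi$, so $\phi(x_n^C)=x_n^C$. I would then set $x'=x''=x_n^C$. By hypothesis $x_n^C$ belongs to the lower contour set of $x^*$, so both $x'$ and $x''$ lie in that lower contour set as required. The inequality chain \eqref{eqzaneetti1} becomes
\[
\phi(x')=x_n^C\leq x_n^C\leq x_n^C=\phi(x''),
\]
which holds trivially with equalities throughout. Hence \eqref{eqzaneetti1} is satisfied, and \Cref{thm:int_covers} immediately yields that $x^*$ is I-plausible.

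There is essentially no obstacle here; the corollary is a one-line specialization of the theorem, obtained by exploiting that a Cournot action is a fixed point of $\phi$ and therefore satisfies $\phi\leq\mathrm{id}\leq\phi$ at that point. The only thing to be careful about is the logical direction: \Cref{thm:int_covers} is stated as an equivalence, but I only need its \textit{if} part (the sufficiency of the condition), whose proof the authors have already supplied in the main text via the explicit admissible pair $K=\big\{\{x^*\},[\underline{x},x''],[x',\overline{x}]\big\}$. In fact, substituting $x'=x''=x_n^C$ into that construction gives the concrete witnessing CST $K=\big\{\{x^*\},[\underline{x},x_n^C],[x_n^C,\overline{x}]\big\}$ with $\beta(\{x^*\})=x^*$ and $\beta$ assigning $x_n^C$ to each of the two covering intervals, so the reader can see the implementing pair directly if desired.
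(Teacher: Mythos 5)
Your proof is correct and matches the paper's intended argument: the paper states the corollary is an immediate consequence of \Cref{thm:int_covers}, and your observation that a Cournot action $x_n^C$ is a fixed point of $\phi$, so that $x'=x''=x_n^C$ satisfies \eqref{eqzaneetti1} with equalities and lies in the lower contour set of $x^*$ by hypothesis, is exactly the one-line specialization the authors have in mind. The explicit witnessing CST you provide is a nice bonus but not required.
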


Applying \Cref{thm:int_covers} to the example of \Cref{F:fig1} shows that the I-plausible actions are $ \{0\} \cup \big[ 5/17,5/2 \big]$. In particular,  actions in the interval  $\big(5/9,5/4\big)$ are I-plausible but are not simply plausible.\footnote{An action $x^* \in \big(5/9,5/4\big)$ is for instance implemented by the pair $(K,\beta)$ where 
$K=\big\{\gX,[0,x^*]\big\}$, $\beta\big(\gX\big)=0$, 
and $\beta\big([0,x^*]\big)=x^*$. To see that no $x^* \in \big(0,5/17\big)$ is I-plausible, notice that the intersection between $\{x:\phi(x)\geq x\}$ and the lower contour set of  $x^*$ with respect to $U$  is empty.}

The previous analysis has shown that an outcome may be I-plausible even though it is not simply plausible and that I-plausible outcomes can give the leader a smaller payoff than her lowest Cournot payoff. However, in certain prominent cases, every I-plausible outcome does guarantee the leader at least her minimum Cournot payoff. For example, consider  a setting with a unique Cournot action $x^C$. We have in this case $\big\{x  : \phi(x) \geq x \big\}= [\underline{x}, x^C]$ and $\big\{x  : \phi(x) \leq x \big\}= [x^C, \overline{x}]$. Applying \Cref{thm:int_covers} thus shows that all I-plausible actions must belong to the upper contour set of $x^C$ with respect to $U$. As all actions in the upper contour set of $x^C$ are simply plausible (\Cref{cor:UCS_LCS}), we conclude that in this case an action is I-plausible if and only if it is in the upper contour set of $x^C$ with respect to $U$. A similar result holds when $U$ is either quasi-convex or quasi-concave.

\begin{proposition}\label{cor:2} 
If there exists a unique Cournot outcome, or if  $U$ is either quasi-convex or quasi-concave, then an action is I-plausible if and only if it belongs to the union of the upper contour sets of the Cournot actions with respect to $U$.
\end{proposition}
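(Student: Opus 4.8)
The \emph{if} direction requires no assumption on the shape of $U$: it is exactly \Cref{corocoussinmoel}, which says that every action in the upper contour set of some Cournot action is I-plausible. So the entire content of the proposition lies in the \emph{only if} direction, and the plan is to show that, under either quasi-convexity or quasi-concavity, every I-plausible action lies in the upper contour set of \emph{some} Cournot action.

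The starting point is \Cref{thm:int_covers}. Fix an I-plausible $x^*$ and write $L:=\{x\in\mathcal{X}:U(x)\le U(x^*)\}$ for its lower contour set. The theorem hands me $x',x''\in L$ with $\phi(x')\le x'\le x''\le\phi(x'')$, as in \eqref{eqzaneetti1}. Since $R_F$ and $R_L$ are each the unique maximizer of a strictly concave $C^2$ objective over a compact interval, both are continuous, hence so is $\phi=R_L\circ R_F$, and therefore so is $x\mapsto\phi(x)-x$. The engine of the argument is then the intermediate value theorem: a sign change of $\phi(x)-x$ across an interval produces a fixed point of $\phi$, i.e.\ a Cournot action. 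The goal is to manufacture such a fixed point \emph{inside} $L$, which immediately places $x^*$ in that Cournot action's upper contour set and finishes the proof.

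When $U$ is quasi-convex this is direct: $L$ is convex, hence an interval, so $[x',x'']\subseteq L$. Because $\phi(x')-x'\le 0\le\phi(x'')-x''$, the intermediate value theorem yields a fixed point $x^C_n\in[x',x'']\subseteq L$, as required. The quasi-concave case is the real obstacle. Here $L$ is the complement of the (interval-shaped, by quasi-concavity and continuity) strict upper contour set, so $L$ is typically \emph{disconnected}: $L=[\underline{x},a]\cup[b,\overline{x}]$ for some $a\le b$. A fixed point of $\phi$ squeezed between $x'$ and $x''$ may now fall in the gap $(a,b)$, where $U>U(x^*)$, and the quasi-convex argument breaks down.

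I plan to sidestep the gap by exploiting the boundary behavior of $\phi$: since $\phi$ maps into $\mathcal{X}=[\underline{x},\overline{x}]$, one always has $\phi(\underline{x})\ge\underline{x}$ and $\phi(\overline{x})\le\overline{x}$. Now locate $x'$ within $L$. If $x'$ lies in the left component $[\underline{x},a]$, apply the intermediate value theorem to $\phi(x)-x$ on $[\underline{x},x']$, where it is nonnegative at $\underline{x}$ and nonpositive at $x'$, producing a Cournot action in $[\underline{x},x']\subseteq[\underline{x},a]\subseteq L$. If instead $x'$ lies in the right component $[b,\overline{x}]$, then $x''\ge x'\ge b$ forces $x''$ into the same component, and the intermediate value theorem on $[x'',\overline{x}]$ (nonnegative at $x''$, nonpositive at $\overline{x}$) delivers a Cournot action in $[x'',\overline{x}]\subseteq[b,\overline{x}]\subseteq L$. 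Either way a Cournot action lands in $L$. The degenerate cases fold in automatically: if $L=\mathcal{X}$ (so $x^*$ maximizes $U$) any Cournot action works, and at least one exists by the intermediate value theorem applied to $\phi(x)-x$ on all of $\mathcal{X}$. The one step to verify with care is precisely the claim that the strict upper contour set of a continuous quasi-concave function is an interval, since that is what licenses the two-component description of $L$ underpinning the whole quasi-concave argument.
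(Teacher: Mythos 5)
Your proof is correct and follows essentially the same route as the paper's: both directions rest on \Cref{corocoussinmoel} and \Cref{thm:int_covers}, with the only-if part driven by the convexity of the relevant contour set under quasi-convexity/quasi-concavity and the intermediate value theorem applied to $\phi(x)-x$ using $\phi(\underline{x})\geq\underline{x}$ and $\phi(\overline{x})\leq\overline{x}$. The paper argues the contrapositive (showing the pair $(x',x'')$ required by \Cref{thm:int_covers} cannot exist when every Cournot action lies strictly above $x^*$ with respect to $U$), whereas you run the same facts forward to exhibit a Cournot fixed point inside the lower contour set; the two are equivalent.
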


\subsection{Partitional Commitment Structures} \label{subsecPplaus}

We say that an outcome $(x,y)$ is \textit{P-plausible} if it is 
an equilibrium outcome of $G(K)$ for some commitment structure $K$ which partitions the leader's action space. Every simply-plausible outcome is  P-plausible. However, as Example A illustrated, an outcome may be P-plausible even though it is not simply plausible. We start this subsection by establishing that any plausible outcome is, in fact, P-plausible.\footnote{We thank an anonymous referee for pointing this out to us.} All proofs for this section are in \Cref{A:subsecPplaus}.

\begin{proposition} \label{P:ref1conjecture}
An outcome is plausible if and only if it is P-plausible. Moreover, the set of P-plausible outcomes is $U$-monotone.
\end{proposition}

 The \textit{if} part of the proposition is trivial. The gist of the proof of the \textit{only if} part is as follows. Let $(x,y)$ be plausible, and $(K, \beta)$ be an admissible pair that implements $(x,y)$. Suppose that $K=\{\mathcal{X}_1, \cdots, \mathcal{X}_n \}$, and\footnote{See the \Cref{A:subsecPplaus} for the general case.}  
 \begin{equation}\label{eqrefproof}
 \beta(\mathcal{X}_i) \notin \bigcup_{j\neq i} \mathcal{X}_j, ~~\text{for  $i=1, \cdots, n$.}
  \end{equation}
Now let $\mathcal{X}'_i=\mathcal{X}_i \setminus  \bigcup_{j<i} \mathcal{X}_j$, for $i=1, \cdots, n$, and $K'=\{\mathcal{X}'_i, \cdots, \mathcal{X}'_n\}$. Then $K'$ partitions $\mathcal{X}$ and \eqref{eqrefproof} implies $\beta(\mathcal{X}_i) \in \mathcal{X}'_i$, for  $i=1, \cdots, n$. Letting $\beta'(\mathcal{X}'_i)=\beta(\mathcal{X}_i)$ for  $i=1, \cdots, n$, we see that $(K', \beta')$ is admissible, since $\mathcal{X}'_i \subseteq \mathcal{X}_i$. Finally, $(K', \beta')$ evidently implements $(x,y)$,  so $(x,y)$ is P-plausible.

Since any plausible outcome is also P-plausible, 
the set of P-plausible outcomes is typically very large. To make progress, we restrict attention in the rest of this subsection to settings that satisfy the following three regularity conditions:
\begin{enumerate} \setlength{\itemindent}{1em}
\item [\textbf{(RC1)}] there exists a  unique and interior Cournot outcome;
\item [\textbf{(RC2)}] all externalities are either strictly positive or strictly negative;\footnote{Formally,  externalities are strictly positive (respectively, negative) if 
$u$ and $v$ are strictly increasing (respectively, decreasing) in their second arguments.}
\item [\textbf{(RC3)}] payoffs are either strictly supermodular or strictly submodular.\footnote{Formally, $u$ (and similarly $v$) is strictly supermodular (respectively, submodular) if 
$u(x',y')+u(x,y) > u(x',y)+u(x,y')$  (respectively,  $u(x',y')+u(x,y) < u(x',y)+u(x,y')$)  for all $x'>x$ and $y'>y$.
Supermodular payoffs capture strategic complementarities; submodular payoffs capture strategic substitutabilities.
}
\end{enumerate}
For instance, the  duopoly example of \Cref{subsec:duoexample} satisfies all three conditions as long as the  returns to scale are not too large ($r<1$). When $u$ and $v$ are twice differentiable, conditions (RC2) and (RC3) become  $~u_{2}v_{2}>0$ and   $~u_{12}v_{12}>0$, respectively. Slightly abusing notation, whether or not the payoffs are differentiable, $u_2>0$ will indicate positive  externalities, and $u_2<0$ negative externalities. Similarly, $u_{12}>0$ will indicate supermodular payoffs, and $u_{12}<0$ submodular payoffs.

As $u(\cdot, R_F(x))$ is strictly  quasi-concave, for  $x\leq x^C$ the set of actions which the leader would weakly prefer to $x$ when the follower best-responds to $x$ can be written as  $[x, \gamma(x)]$, for some  $\gamma(x)$.\footnote{If $x=x^C$ then $\gamma(x)=x^C$. Otherwise, either $u(\overline{x},R_F(x))> u(x,R_F(x))$ in which case $\gamma(x)=\overline{x}$, or $u(\overline{x},R_F(x))\leq  u(x,R_F(x))$ in which case $\gamma(x)$ is the unique action different from $x$ making the leader  indifferent between choosing $x$ and  $\gamma(x)$ when the follower best-responds to $x$.}
Similarly, for  $x\geq x^C$ 
 said set of actions can be written as  $[ \gamma(x),x]$.\footnote{If $x=x^C$ then $\gamma(x)=x^C$.
Otherwise,  either $u(\underline{x},R_F(x))> u(x,R_F(x))$ in which case $\gamma(x)=\underline{x}$, or $u(\underline{x},R_F(x))\leq  u(x,R_F(x))$ in which case $\gamma(x)$ is the unique action different from $x$ making the leader  indifferent between choosing $x$ and  $\gamma(x)$ when the follower best-responds to $x$.}
This defines a 
mapping $\gamma: \mathcal{X} \rightarrow \mathcal{X}$;
note that, as $u$ and $R_F$ are continuous, so is $\gamma$.
Finally,  let
\begin{equation*}
\mathcal{S}:=\begin{cases}
 \big\{ x:\;  x \leq  \gamma(x) \leq x^C  \big\}   & \text{if  $ u_2u_{12}>0$},\\
 \big\{ x:\;   x^C \leq  \gamma(x) \leq x   \big\} & \text{if  $ u_2u_{12}<0$}.
\end{cases}   
\end{equation*}
Since  $\gamma$ is continuous, the set $\mathcal{S}$ is compact.
Moreover, this set evidently contains $x^C$. We can now  characterize the set of plausible outcomes.

\begin{theorem}\label{proplast} 
Suppose (RC1)--(RC3) hold. The set of  plausible actions  coincides with the upper level set of $\underline{U}:= \min_{ x \in \mathcal{S}}\; U\big( \gamma(x) \big)$ with respect to $U$.\footnote{The upper level set of $\underline{U}$ with respect to $U$ is defined as $\{x: U(x) \geq \underline{U}\}$.}
\end{theorem}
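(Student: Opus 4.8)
The plan is to work entirely through admissible pairs. By \Cref{rem:remark1}, $x^*$ is P-plausible iff there is a partition $K$ and an admissible $\beta$ with $U(x^*)=\max_{\mathcal{X}_j\in K}U(\beta(\mathcal{X}_j))$ and $x^*=\beta(\mathcal{X}_{i^*})$ for some block. I would first record a ``singleton trick'': the block $\{x^*\}$ is always admissible with $\beta(\{x^*\})=x^*$, so $x^*$ is P-plausible iff $\mathcal{X}\setminus\{x^*\}$ admits a partition into blocks each carrying a continuation equilibrium of value at most $U(x^*)$. Thus the whole theorem reduces to identifying the \emph{minimal confinement value} $\underline{U}$, i.e.\ the smallest number $P$ such that some partition has $\max_j U(\beta(\mathcal{X}_j))=P$, and showing this minimum equals $\min_{x\in\mathcal{S}}U(\gamma(x))$. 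Before either direction I would establish the structural facts forced by the regularity conditions: (RC3) makes $\phi$ increasing, so (RC1) gives a single crossing of the diagonal, with $\phi(x)>x$ for $x<x^C$ and $\phi(x)<x$ for $x>x^C$; consequently $\gamma(x)\ge x$ for $x\le x^C$ and $\gamma(x)\le x$ for $x\ge x^C$, and the open interval where $\eta(\cdot,b)>0$ is precisely the open interval between $b$ and $\gamma(b)$. Finally, differentiating $U(x)=u(x,R_F(x))$ and using the sign of $u_1(\cdot,R_F(x))$ together with (RC2)--(RC3) shows that, when $u_2u_{12}>0$, $U$ is strictly increasing on $[\underline{x},x^C]$ (and symmetrically $U$ is decreasing on $[x^C,\overline{x}]$ when $u_2u_{12}<0$). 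I would treat the case $u_2u_{12}>0$ and obtain the other by the mirror argument; note also that $x^C\in\mathcal{S}$ with $\gamma(x^C)=x^C$ gives $\underline{U}\le U(x^C)$.

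\textbf{Sufficiency.} Here I would construct a single ``floor partition'' $K^*$ of $\mathcal{X}$ whose maximal continuation value equals $\underline{U}$, built from $\gamma$ and $\mathcal{S}$. Writing $x_0$ for the minimizer of $U(\gamma(\cdot))$ over $\mathcal{S}$, the extremal block is a non-convex set containing both $x^C$ and $\gamma(x_0)$ but avoiding the open interval on which $\eta(\cdot,\gamma(x_0))>0$, so that $\gamma(x_0)$ is a valid continuation equilibrium of value $U(\gamma(x_0))=\underline{U}$; the remaining actions are tiled by analogous blocks whose defining inequalities $x\le\gamma(x)\le x^C$ (the membership condition for $\mathcal{S}$) guarantee the intended $\beta$ is genuinely a best response within each block and never exceeds $\underline{U}$. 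For a general target $x^*$ with $U(x^*)\ge\underline{U}$, I would split $\{x^*\}$ off from whichever block of $K^*$ contains it; deleting one point only relaxes the admissibility constraint (b), so the remaining blocks still carry continuation values at most $\underline{U}\le U(x^*)$, while $\{x^*\}$ delivers exactly $U(x^*)$. Then condition (ii) of \Cref{rem:remark1} holds with $\mathcal{X}_{i^*}=\{x^*\}$, so $x^*$ is P-plausible.

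\textbf{Necessity.} This is the direction I expect to be the main obstacle. Fix any partition $K$ with admissible $\beta$ and let $P=\max_j U(\beta(\mathcal{X}_j))$; I must show $P\ge\underline{U}$. Let $A_0$ be the block containing $x^C$ and $b_0=\beta(A_0)$. Admissibility forces $\eta(x^C,b_0)\le0$, which by strict concavity of $\eta(\cdot,b_0)$ means $x^C$ lies outside the open interval between $b_0$ and $\gamma(b_0)$; combined with $\gamma(b_0)\ge b_0$ this yields $b_0\le\gamma(b_0)\le x^C$, i.e.\ $b_0\in\mathcal{S}$, whenever $b_0<x^C$. The delicate point is that $U(b_0)$ by itself need \emph{not} reach $\underline{U}$ (since $U$ is increasing on $[\underline{x},x^C]$ and $b_0\le\gamma(b_0)$ gives only $U(b_0)\le U(\gamma(b_0))$, the wrong direction), so the bound cannot come from the $x^C$-block in isolation and a genuinely global argument is required. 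I would run a covering/minimal-counterexample argument: if $P<\underline{U}$, then every block traps the leader below $\underline{U}$, and in particular the forbidden interval between $b_0$ and $\gamma(b_0)$ must be tiled entirely by further sub-$\underline{U}$ traps; iterating and using monotonicity of $\gamma$, compactness of $\mathcal{S}$, and the fact that $U$ is increasing on $[\underline{x},x^C]$ drives the trapped actions strictly rightward toward $x^C$, where $U\ge\underline{U}$, contradicting the minimality built into $\underline{U}=\min_{x\in\mathcal{S}}U(\gamma(x))$. The crux is making this iteration well-founded and proving the single-step inequality that each successive block's continuation value strictly exceeds the previous one; this is where (RC1)--(RC3) must be used in full. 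Combining the two directions identifies the P-plausible set with $\{x:U(x)\ge\underline{U}\}$, the upper level set of $\underline{U}$ with respect to $U$.
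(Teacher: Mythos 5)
Your skeleton (work through admissible pairs, use the single crossing of $\phi$ at $x^C$, the monotonicity of $U$ on $[\underline{x},x^C]$ when $u_2u_{12}>0$, and the fact that $\eta(\cdot,b)>0$ exactly on the open interval between $b$ and $\gamma(b)$) is the right one, but both directions have concrete gaps. In the sufficiency step you anchor the extremal block's continuation action at $\gamma(x_0)$ and ask the block to contain $x^C$ while avoiding the set where $\eta(\cdot,\gamma(x_0))>0$. That set is the open interval between $\gamma(x_0)$ and $\gamma\big(\gamma(x_0)\big)$, and nothing in (RC1)--(RC3) guarantees $\gamma\big(\gamma(x_0)\big)\le x^C$ (i.e.\ that $\gamma(x_0)\in\mathcal{S}$); when it exceeds $x^C$, the point $x^C$ --- and, worse, all actions just above $\gamma(x_0)$, whose $U$-values sit just above $\underline{U}$ --- lie in the forbidden interval, cannot join your extremal block, and must be trapped below $\underline{U}$ by the unspecified ``tiling,'' which is exactly the original problem. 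The paper's construction avoids this by anchoring the continuation action at $x_0=\hat{x}$ itself: the forbidden interval is then $\big(\hat{x},\gamma(\hat{x})\big)\subseteq[\underline{x},x^C]$, and since $U$ is increasing there, every action with $U$-value at least $U\big(\gamma(\hat{x})\big)=\underline{U}$ lies weakly above $\gamma(\hat{x})$, hence outside it. A single block $\{\hat{x}\}\cup\mathcal{Q}_{>}(x^*)$ with $\beta=\hat{x}$, plus singletons for everything else, then implements any $x^*$ with $U(x^*)\ge\underline{U}$; no floor partition or point-deletion step is needed (deleting $x^*$ from a block of a pre-built partition is itself risky, since the residual block may lose its continuation equilibrium).

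For necessity you correctly diagnose that the block containing $x^C$ only delivers $b_0\in\mathcal{S}$ and the unusable inequality $U(b_0)\le U\big(\gamma(b_0)\big)$, but the iteration you then propose is not carried out and you acknowledge its crux is open; as written this direction is missing. The paper closes it in one step, with no induction: if $U(x^*)<\underline{U}=U(\hat{\gamma})$, continuity and monotonicity of $U$ on $[\underline{x},x^C]$ yield an $x^\dagger<\hat{\gamma}$ with $U(x^*)<U(x^\dagger)<U(\hat{\gamma})$. Because $U(x^\dagger)>U(x^*)$, any block $\mathcal{X}_i$ containing $x^\dagger$ in a pair implementing $x^*$ must satisfy $\beta(\mathcal{X}_i)<x^\dagger$ (otherwise, with $R_F$ non-decreasing and $u_2>0$, deviating to $x^\dagger$ inside $\mathcal{X}_i$ would beat $U(x^*)$), and then strict concavity of $\eta\big(\cdot,\beta(\mathcal{X}_i)\big)$ forces $\beta(\mathcal{X}_i)<\gamma\big(\beta(\mathcal{X}_i)\big)\le x^\dagger<\hat{\gamma}\le x^C$. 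Hence $\beta(\mathcal{X}_i)\in\mathcal{S}$ and $U\Big(\gamma\big(\beta(\mathcal{X}_i)\big)\Big)\le U(x^\dagger)<\underline{U}$, contradicting the definition of $\underline{U}$ as a minimum over $\mathcal{S}$. The missing idea is to apply the deviation argument not to the block containing $x^C$ but to the block containing a point $x^\dagger$ chosen strictly between $x^*$ and $\hat{\gamma}$ in $U$-value and below $\hat{\gamma}$ in the action space. Finally, the boundary case $\mathcal{S}=\{x^C\}$ needs separate treatment (the paper uses \Cref{propferrarelleB} together with \Cref{lemmaonyxtabl}), since there $\hat{x}=x^C$ and your construction degenerates.
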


 Our proof that 
 any   $x$ such that $U(x)\geq \underline{U}$
 is a 
 plausible action   rests on the following methodology.\footnote{The proof that the upper level set of $\underline{U}$  includes the set of  plausible actions   is  more technical. We refer the interested reader  to   the appendix.} 
 We first  collect all $x'$ in the strict upper contour set of $x$ with respect to $U$, and place these $x'$ into a set together with some worse alternative, say $x''$. Then, provided $(x'', R_F(x''))$ is a Nash equilibrium of the corresponding subgame, the action $x$ we started from can readily be made into an equilibrium outcome.

 To be more specific, 
consider the case in which $u_2>0$ and $u_{12}>0$, that is, players' actions have positive externalities and are strategic complements. In this case, 
$\mathcal{S}=\big\{ x:\;  x \leq  \gamma(x) \leq x^C  \big\} $, and 
the function $U$ is increasing in the range $[\underline{x}, x^C]$.\footnote{Intuitively, increasing the action of the leader induces the follower to increase her action too (due to strategic complementarities), and this benefits the leader (since externalities are positive).} Now
pick some action 
$\hat{x} \in \mathcal{S}$ and let $x^*$ be in the upper contour set of $\gamma(\hat{x})$ with respect to $U$;
we will argue that  $x^*$ is plausible. The case $\hat{x}=x^C$ being trivial, suppose $\hat{x}\neq x^C$ and so  $\hat{x} <  \gamma( \hat{x}) < x^C$.
Let $\mathcal{X}_1=[\underline{x},\hat{x}]\cup(\gamma(\hat{x}),x^*)\cup(x^*,\overline{x}]$.   
Next, let $K$ be the partition of the leader's action space containing $\mathcal{X}_1$, the interval $(\hat{x},\gamma(\hat{x})]$ and the singleton $\{x^*\}$. Lastly, let $\beta(\mathcal{X}_1)=\hat{x}$, $\beta((\hat{x},\gamma(\hat{x})])=\gamma(\hat{x})$ and $\beta(\{x^*\})=x^*$. We claim that the pair $(K, \beta)$ is admissible.
By definition of 
$\gamma$, 
the set of actions which the leader  weakly prefers  to $\hat{x}$ when the follower best-responds to $\hat{x}$ can be written as  $[\hat{x}, \gamma(\hat{x})]$.
So
$\eta( x , \hat{x})\leq 0$ for all $x\in \mathcal{X}_1$. Furthermore as $\gamma(\hat{x})<x^C$ then $\phi(\gamma(\hat{x}))>0$ and therefore $\eta(x,\gamma(\hat{x}))\leq 0$ for all $x\in(\hat{x},\gamma(\hat{x})]$. Finally, we claim that  $(K, \beta)$ implements $x^*$.
Indeed,   $
U(\hat{x})<
 U\big(\gamma(\hat{x}) \big)\leq U(x^*)$; the first inequality follows from 
$\hat{x}< \gamma(\hat{x})$ and the fact that  $U$ is increasing 
over $[\underline{x}, \gamma(\hat{x})]$; the second inequality is immediate,
 since $x^*$ is in the upper contour set of   $\gamma(\hat{x})$ by definition.\footnote{This construction immediately implies that all plausible outcomes are an equilibrium outcome of $G(K)$ for some  commitment structure $K$ with at most three elements.}

\Cref{F:figS} illustrates \Cref{proplast} in the context of the duopoly example from  \Cref{subsec:duoexample} with  $\costpar=4/5$. In panel A, the black curve represents the graph of the function $\phi$, which crosses the 45-degree line at $x^C=5/11$. In this example $u_2<0$ and $u_{12}<0$, so  $\mathcal{S}= \big\{ x:\; x \leq \gamma(x) \leq x^C \big\}$. The gray curve represents the graph of the function $\gamma$: we see that $\mathcal{S}=[0, x^C]$ and $\gamma(\mathcal{S})=\big[5/18, x^C\big]$. Panel B depicts the graph of the function $U$. Minimizing $U\big(\gamma(x)\big)$ over $\mathcal{S}$ shows that $\underline{U}=U\big(\gamma(0)\big)=U(5/18)$.
So the set of plausible actions is  $\big[5/18,x^\dagger\big]$, where $U(x^\dagger)=U(5/18)$. 
Note that $\underline{U}$ is less than the leader's Cournot payoff $U(x^C)$, hence some plausible outcomes give the leader less than her Cournot payoff.

\begin{figure}[!htb]
  \centering
  \subfloat[]{\includegraphics[width=0.47\textwidth,trim={16pt 11pt 21pt 8pt},clip]{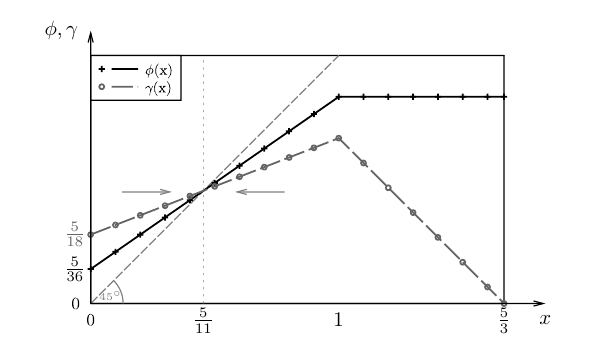}\label{F:phiS}}
  \hfill
  \subfloat[  ]{\includegraphics[width=0.47\textwidth,trim={16pt 11pt 21pt 8pt},clip]{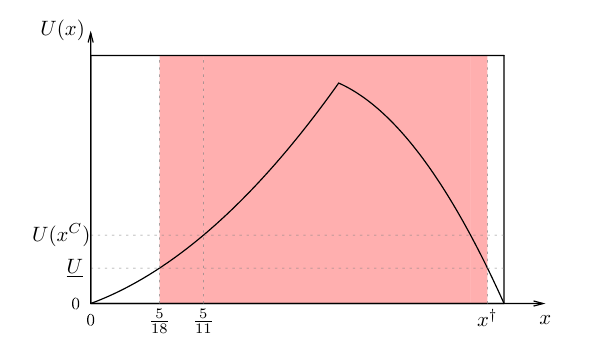}\label{F:US}}
  \caption{}
  \label{F:figS}
\end{figure}

 The question remains as to whether we can find conditions that guarantee   $\underline{U}<U(x^C)$. We show in \Cref{A:subsecPplaus} that when the payoff functions are twice  differentiable, a simple sufficient condition is given by $\gamma'(x^C)>0$. Calculations relegated to \Cref{A:subsecPplaus} establish that $\gamma'(x^C)>0$ if and only if $R_{L}'(y^C)R_F'(x^C)>1/2$.
 In other words,   whenever the best-response functions are sufficiently steep, some plausible outcomes give the leader less than her Cournot payoff. 
\begin{proposition}\label{remark:linear}
Suppose $u$ and $v$ are twice  differentiable and (RC1)--(RC3) hold. If $R_{L}'(y^C)R_F'(x^C)>1/2$ then $\underline{U}<U(x^C)$.
\end{proposition}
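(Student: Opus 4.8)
The plan is to prove the proposition in two independent steps, mirroring the discussion that precedes it: first, establish the sufficiency claim that $\gamma'(x^C)>0$ implies $\underline{U}<U(x^C)$; second, compute $\gamma'(x^C)$ and show it is positive exactly when $R_L'(y^C)R_F'(x^C)>1/2$. Combining the two immediately yields the proposition.

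For the computation of $\gamma'(x^C)$, I would first record that $\gamma(x^C)=x^C$: since $x^C=R_L(y^C)$ maximizes $u(\cdot,y^C)$, the strictly concave map $\eta(\cdot,x^C)$ attains its maximum value $0$ only at $x^C$, so it has no second zero and $\gamma(x^C)=x^C$ by definition. Near $x^C$ the two roots $x$ and $\gamma(x)$ of $z\mapsto\eta(z,x)$ coalesce, so the implicit function theorem does not apply at the double root. Instead I would expand $\eta(z,x)=u(z,R_F(x))-u(x,R_F(x))$ to second order about $(x^C,y^C)$, using $u_1(x^C,y^C)=0$. Writing $x=x^C+s$, $z=x^C+t$, and $R_F(x)=y^C+R_F'(x^C)s+o(s)$, the vanishing of the nontrivial factor of $\eta$ reduces to $\tfrac12 u_{11}(t+s)+u_{12}R_F'(x^C)s=o(s)$; equivalently, the peak of the approximating parabola (located at $\phi(x)-x^C$) is the average of its two roots. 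This gives $\gamma(x)=2\phi(x)-x+o(|x-x^C|)$ and hence $\gamma'(x^C)=2\phi'(x^C)-1$. Since $\phi=R_L\circ R_F$, the chain rule yields $\phi'(x^C)=R_L'(y^C)R_F'(x^C)$, so $\gamma'(x^C)>0\iff R_L'(y^C)R_F'(x^C)>1/2$, which is the second step.

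For the sufficiency step I would argue locally. Because $x^C\in\mathcal{S}$ and $\gamma(x^C)=x^C$, we always have $\underline{U}\le U(\gamma(x^C))=U(x^C)$; the task is to produce a witness $x\in\mathcal{S}$ with $U(\gamma(x))<U(x^C)$. Differentiating $U(x)=u(x,R_F(x))$ and using $u_1(x^C,y^C)=0$ gives $U'(x^C)=u_2(x^C,y^C)R_F'(x^C)$, which is nonzero by (RC2)--(RC3); so $U$ is strictly monotone at $x^C$. The key bookkeeping is that the side on which $\mathcal{S}$ (hence $\gamma(\mathcal{S})$) extends away from $x^C$ coincides with the side on which $U$ falls below $U(x^C)$. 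Indeed, $R_L'(y^C)=-u_{12}/u_{11}$ and $R_F'(x^C)=-v_{12}/v_{11}$, and $R_L'(y^C)R_F'(x^C)>1/2>0$ forces $\operatorname{sign}R_F'(x^C)=\operatorname{sign}u_{12}$, whence $\operatorname{sign}U'(x^C)=\operatorname{sign}(u_2u_{12})$. When $u_2u_{12}>0$ one has $\mathcal{S}=\{x:x\le\gamma(x)\le x^C\}$ and $U'(x^C)>0$, so $\gamma(\mathcal{S})$ lies weakly below $x^C$ precisely where $U$ is smaller; the mirror situation holds when $u_2u_{12}<0$. It remains to check that $\gamma(\mathcal{S})$ is nondegenerate near $x^C$. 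Here I would use that (RC1) forces $\phi'(x^C)\le1$ (otherwise $\phi(x)-x$ would change sign to the left of $x^C$, producing a second Cournot action), so $\gamma'(x^C)=2\phi'(x^C)-1\in(0,1]$. For $\gamma'(x^C)\in(0,1)$ the point $x=x^C\mp\epsilon$ (sign chosen according to the case) satisfies the defining inequalities of $\mathcal{S}$ for small $\epsilon>0$, since $\gamma(x)-x=2(\phi'(x^C)-1)(x-x^C)+o(\epsilon)$ has the right sign, and then $U(\gamma(x))<U(x^C)$ by strict monotonicity. This produces the required witness.

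The main obstacle is twofold. The delicate computation is $\gamma'(x^C)$ itself, since $\gamma$ is defined through a root that becomes double at $x^C$; the second-order expansion (equivalently, the parabola/vertex argument) must be carried out carefully to justify $\gamma'(x^C)=2\phi'(x^C)-1$. The genuinely borderline case is $\phi'(x^C)=1$, i.e. $\gamma'(x^C)=1$, which remains compatible with (RC1)--(RC3) through a cubic-type tangential crossing of $\phi$ with the diagonal. There the first-order term in $\gamma(x)-x$ vanishes and the local-witness argument needs a higher-order expansion to determine on which side $\mathcal{S}$ extends; I would treat this degenerate subcase separately, confirming via the next-order term that $\mathcal{S}$ still reaches strictly past $x^C$ on the correct side.
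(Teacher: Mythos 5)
Your two-step plan is exactly the paper's: the paper first shows that $\gamma'(x^C)$ solves $Z(Z-2\alpha)=1-2\alpha$ with $\alpha:=R_L'(y^C)R_F'(x^C)$, so $\gamma'(x^C)\in\{1,\,2\alpha-1\}$ and is positive when $\alpha>1/2$; it then observes that $\gamma'(x^C)>0$ yields some $x<x^C$ with $x<\gamma(x)<x^C$, hence $x\in\mathcal{S}$, and concludes via the monotonicity of $U$ on $[\underline{x},x^C]$ (Lemma \ref{lemmapouvoirachat}). Your vertex-of-the-parabola derivation of $\gamma(x)=2\phi(x)-x+o(|x-x^C|)$ is a cleaner route to the same derivative: it pins down the root $2\alpha-1$ outright, where the paper's L'Hospital computation leaves the ambiguity between $1$ and $2\alpha-1$ unresolved (harmlessly, since both are positive under the hypothesis). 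Your sign bookkeeping for which side of $x^C$ the set $\gamma(\mathcal{S})$ and the sublevel set of $U$ lie on is correct, and substitutes local monotonicity of $U$ at $x^C$ for the paper's global lemma.

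The one genuine gap is your treatment of the borderline case $\phi'(x^C)=1$, i.e.\ $\gamma'(x^C)=1$. You propose to settle the sign of $\gamma(x)-x$ there ``via the next-order term,'' but $u$ is only assumed twice continuously differentiable, so no next-order term is available, and even in smoother settings the expansion could be degenerate to every order without determining the sign. The fix is that no expansion is needed at all: for $x<x^C$, (RC1) gives $\phi(x)>x$, and since $\eta(\cdot,x)$ is strictly concave with a zero at $x$ and maximum at $\phi(x)>x$, its second zero --- which exists for $x$ near $x^C$ because $\gamma$ is continuous and $\gamma(x^C)=x^C<\overline{x}$ --- must lie strictly beyond the peak, so $\gamma(x)>\phi(x)>x$ for every such $x$, regardless of the value of $\gamma'(x^C)$. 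Combined with $\gamma(x)<x^C$ for $x$ slightly below $x^C$ (which follows from $\gamma'(x^C)>0$ alone), this produces the required witness $x\in\mathcal{S}\setminus\{x^C\}$ with $U(\gamma(x))<U(x^C)$ uniformly in all cases, and in fact renders your case split between $\gamma'(x^C)\in(0,1)$ and $\gamma'(x^C)=1$ unnecessary. The mirror argument applies when $u_2u_{12}<0$. With that substitution your proof is complete and matches the paper's.
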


We have seen that the set of plausible outcomes may be larger than the set of simply-plausible ones. \Cref{P:ref1conjecture} shows that all plausible outcomes can be generated by partitional CSTs. A natural question is whether an even more restrictive class of CSTs can generate the entire set of plausible outcomes. The answer is yes, and one such class consists of what we call \textit{quasi-simple} CSTs:  a commitment structure is quasi-simple if it partitions the leader's action space and contains at most one element that is not an interval.
We say that an outcome is \textit{quasi-simply plausible} if it is an equilibrium outcome of $G(K)$ for some quasi-simple commitment structure $K$.

\begin{proposition}  
\label{propeasyfordes}
Suppose (RC1)--(RC3) hold. Then every plausible outcome is quasi-simply plausible.
\end{proposition}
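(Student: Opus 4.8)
The plan is to prove three inclusions that, chained together, collapse all the relevant solution concepts onto one another. Since any quasi-simple CST is in particular a partition, every quasi-simply plausible outcome is P-plausible and hence plausible; so it suffices to show that every plausible outcome is quasi-simply plausible. By \Cref{proplast} the P-plausible actions are exactly the upper level set $\{x:U(x)\ge\underline{U}\}$. I will therefore establish (A) that every plausible action lies in this upper level set, and (B) that every action in the upper level set is implementable by a CST that partitions $\mathcal{X}$ and has at most one non-interval element. Combining (A), (B), and \Cref{proplast} gives $\text{plausible}\subseteq\{U\ge\underline{U}\}\subseteq\text{quasi-simply plausible}\subseteq\text{plausible}$, so all four sets coincide and the proposition follows.

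For (A) — which I expect to be the crux — the idea is to extract from an arbitrary CST the single feature the lower-bound argument actually needs. Let $(x^*,y^*)$ be plausible, implemented by an admissible pair $(K,\beta)$ with leader payoff $P:=U(x^*)=\max_{\mathcal{X}_j\in K}U\big(\beta(\mathcal{X}_j)\big)$ (\Cref{rem:remark1}). Because $K$ covers $\mathcal{X}$, every $x\in\mathcal{X}$ lies in some $\mathcal{X}_{j}$; writing $b:=\beta(\mathcal{X}_{j})$, admissibility gives $\eta(x,b)\le 0$ and the equilibrium condition gives $U(b)\le P$. Thus
\[
(\ast)\qquad \text{for every } x\in\mathcal{X} \text{ there is } b\in\mathcal{X} \text{ with } \eta(x,b)\le 0 \text{ and } U(b)\le P.
\]
Property $(\ast)$ is precisely the consequence of a partitional equilibrium that the \emph{only-if} direction of \Cref{proplast} exploits; crucially it is an existential statement and makes no use of disjointness, so it holds verbatim for covers. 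Running the same argument — tracking, through the monotone map $\gamma$ and the set $\mathcal{S}$ furnished by (RC1)--(RC3), the chain of dominating actions that $(\ast)$ produces starting near $x^C$ — yields $P\ge\underline{U}$, placing $x^*$ in the upper level set.

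For (B) I would give the explicit quasi-simple construction, mirroring the worked example following \Cref{proplast} and distinguishing the position of $x^*$ relative to $x^C$ (the sign of $u_2u_{12}$, which fixes the orientation of $\mathcal{S}$ and $\gamma$, is handled by the symmetric argument). If $x^*$ lies in the upper contour set of $x^C$, then \Cref{propferrarelleB} makes $x^*$ simply plausible, and a simple CST is quasi-simple. If $\gamma(\underline{x})\le x^*<x^C$, the two-element partition $\big\{(\underline{x},x^*],\,\{\underline{x}\}\cup(x^*,\overline{x}]\big\}$ with $\beta\big((\underline{x},x^*]\big)=x^*$ and $\beta\big(\{\underline{x}\}\cup(x^*,\overline{x}]\big)=\underline{x}$ works: its single non-interval element avoids the positivity interval $(\underline{x},\gamma(\underline{x}))$ of $\eta(\cdot,\underline{x})$ exactly because $x^*\ge\gamma(\underline{x})$, so the pair is admissible, while $U(\underline{x})\le U(x^*)$ secures implementation. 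If instead $x^*$ lies above the upper contour set of $x^C$, I append a third, \emph{interval} element $(\underline{x},\gamma(\underline{x})]$ carrying the admissible boundary action $\gamma(\underline{x})$ (legitimate since $\gamma(\underline{x})<x^C$ forces $\phi(\gamma(\underline{x}))>\gamma(\underline{x})$, cf. \Cref{lem:extension}), whose payoff $U(\gamma(\underline{x}))=\underline{U}\le U(x^*)$ does not upset implementation; the deterring piece $\{\underline{x}\}\cup(\gamma(\underline{x}),x^*)$ remains the only non-interval element. In every case the CST partitions $\mathcal{X}$ with at most one non-interval element.

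The main obstacle is (A): the construction in (B) is a direct, if case-heavy, extension of the example, whereas (A) must show that enlarging the class of CSTs from partitions to arbitrary covers cannot depress the leader's payoff below $\underline{U}$. The cleanest discharge is to isolate $(\ast)$ and verify that the only-if argument of \Cref{proplast} relies on nothing beyond it. The residual risk is that that argument secretly uses uniqueness of the element containing a given action; should that be the case, I would instead re-run the descent argument directly from $(\ast)$, invoking the monotonicity of $\gamma$ to show that a family of dominating actions all lying in the region where $U<\underline{U}$ cannot cover a neighborhood of $x^C$.
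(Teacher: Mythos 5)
Your architecture is exactly the paper's: the proof given there is a one-line observation that the two halves of the proof of \Cref{proplast} already do the work --- the ``only if'' half is stated and proved for \emph{plausible} (not merely P-plausible) actions, because Lemmas \ref{lemmaemercall1} and \ref{lemmaemercall2} are formulated for arbitrary admissible pairs and never invoke disjointness, which is precisely your observation $(\ast)$; and the ``if'' half implements every action of the upper level set with a partition consisting of one possibly non-convex set plus singletons, hence a quasi-simple CST. So part (A) of your plan is sound and is essentially a citation.

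The one genuine flaw is in your explicit constructions for (B): you anchor the deterring action at $\underline{x}$ and the threshold at $\gamma(\underline{x})$, implicitly assuming that $\underline{x}$ is the minimizer of $U\circ\gamma$ over $\mathcal{S}$ (equivalently that $\underline{U}=U(\gamma(\underline{x}))$). That holds in the worked duopoly example but not in general: $\gamma$ is only continuous (not monotone, as you assert), $\underline{x}$ need not belong to $\mathcal{S}$ at all (this fails whenever $\gamma(\underline{x})>x^C$), and even when it does, the minimum of $U(\gamma(\cdot))$ over $\mathcal{S}$ may be attained at an interior point $\hat{x}$ with $\gamma(\hat{x})<\gamma(\underline{x})$. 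In that case your three cases only cover $\mathcal{Q}_{\geq}(\gamma(\underline{x}))$, leaving the P-plausible actions with $U$ between $\underline{U}$ and $U(\gamma(\underline{x}))$ unimplemented; worse, in case (2) the element $\{\underline{x}\}\cup(x^*,\overline{x}]$ is not admissible for any $x^*<\gamma(\underline{x})$. The repair is exactly the paper's construction: take $\hat{x}\in\arg\min_{x\in\mathcal{S}}U(\gamma(x))$, set $\mathcal{X}_1:=\{\hat{x}\}\cup\mathcal{Q}_{>}(x^*)$ with $\beta(\mathcal{X}_1)=\hat{x}$, and complete the partition with singletons; this single uniform construction handles all $x^*\in\mathcal{Q}_{\geq}(\hat{\gamma})\setminus\mathcal{Q}_{\geq}(x^C)$ and makes your case split unnecessary.
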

The proof follows directly from the construction in the proof of \Cref{proplast}. In fact, it can be shown that if $U$ is either quasi-concave or quasi-convex then all plausible outcomes can be implemented with a partition of the leader's action space such that each partition element is either an interval or a union of two intervals. Effectively, these commitment structures are such that the leader plainly commits to choosing an action: (a) inside or outside an interval, (b) below or above a cutoff.




\subsection{Equilibrium Refinements}\label{subsec:refinements}

We saw that the Stackelberg and Cournot payoffs provide the bounds of the payoffs attainable by the leader under any simple CST. A natural question is whether some equilibrium refinement ensures that the Stackelberg and Cournot payoffs provide the bounds of the payoffs attainable by the leader under arbitrary CSTs.

Forward induction type of arguments eliminate some, but not all, 
equilibria giving the leader less than her Cournot payoffs.\footnote{ See \cite{myerson_game_1997} for a discussion of the merits and flaws of forward induction.} For instance, 
consider  the  setting of Example B at the beginning of this section, but this time with  commitment structure 
\[
\left\{ \left[\frac{1}{9}, \frac{4}{9}\right), \left[0, \frac{1}{9}\right) \cup \left[\frac{4}{9},1\right) \right\}.
\]
\Cref{F:example4} depicts the graph of $U$.  The subgame induced by the leader's choice of $[1/9, 4/9 )$ has a unique equilibrium, in which the leader chooses the action $1/9$. The other subgame has an equilibrium in which the leader picks $4/9$. As $U(4/9) > U(1/9)$, 
an equilibrium exists in which the leader chooses $4/9$. Yet, $U(4/9)<U(x^C_n)$, so the leader obtains a payoff smaller than her Cournot payoff. Since the subgame off the equilibrium path possesses a unique equilibrium, forward induction type of arguments have no bite.

\begin{figure}
\centering
\includegraphics[trim={12pt 0pt 21pt 0pt},clip,width=0.55\linewidth]{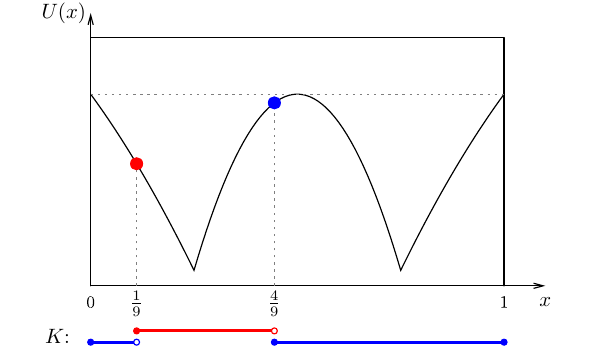}
\caption{} \label{F:example4}
\end{figure}

One alternative is to restrict attention to 
equilibria that select, in every period-2 subgame, the best continuation equilibrium from the perspective of the leader. In this case, any subgame induced by the leader's period-1 choice of a subset containing a Cournot action must give the leader a payoff at least as large as that Cournot payoff. Consequently, any such 
equilibrium ensures that the leader obtains at least her maximum Cournot payoff.

\section{Application} 
\label{sec:applications}

\begin{figure}[!h]
\centering
\includegraphics[width=0.6\textwidth,trim={34pt 11pt 5pt 7pt},clip]{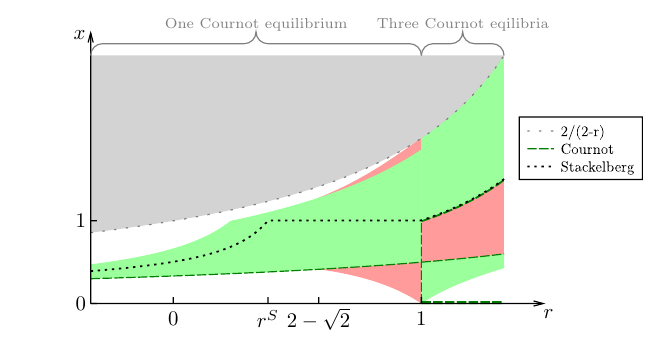}
\caption{} \label{AllOutcomes}
\end{figure}

We now apply our results to the duopoly setting introduced in \Cref{subsec:duoexample}. 
\Cref{AllOutcomes} shows the action set $\mathcal{X} = [0,2/(2-r)]$ as well as the Cournot and Stackelberg actions of the leader. For $r<1$, the unique Cournot outcome is such that leader and follower produce the same amount. For $r>1$ the three Cournot outcomes are, respectively, such that only the leader produces a positive amount, leader and follower produce the same amount, or only the follower produces a positive amount. Whenever $r<r^S$ 
the (unique) Stackelberg outcome is such that both firms produce a positive amount,  whereas 
if $r>r^S$ only the leader produces a positive amount.\footnote{For $r\in(r^S,1)$, the Stackelberg action is larger than the quantity a monopolist would choose;  the two are the same for   $r \geq 1$.}

The green area in \Cref{AllOutcomes} indicates   the set of simply plausible actions. For $r<1$, the green area corresponds to the upper contour set with respect to $U$ of the unique Cournot action (Corollary \ref{cor:UCS_LCS}); for $r>1$, the characterization of the green area is illustrated in the example of  \Cref{F:fig1}.  Every action $x$ in the green area is part of the equilibrium outcome for the simple commitment structure
\[
K=\biggl\{[0,x),\biggl[x,\frac{2}{2-r} \biggr] \biggr\}.
\]

The red area in \Cref{AllOutcomes} indicates the set of actions that are plausible but not simply plausible. For $r<1$, the characterization of this area relies on \Cref{proplast} and the observation that $\gamma(0)<x^C\Leftrightarrow r> 2-\sqrt{2}$;
for $r>1$, \Cref{thm:int_covers} implies that any action in between the  Stackelberg action and the second-highest Cournot action  is plausible.\footnote{Any action $x$ in the white area is not plausible as $u(x,R_F(x))<0$, and the leader can ensure zero profit with action $x=0$.}
Any action in the red area is part of an equilibrium outcome for a  quasi-simple CST.\footnote{For $r<1$, any action $x$ in the lower red area ($x<x^C$) is part of an equilibrium outcome for the commitment structure 
$K=\{(0,x],\{0\}\cup (x,2/(2-r)]\}$, while 
any $x$ in the upper red area ($x>x^C$)   for  $K=\{(0,\gamma(0)],\{0\}\cup(\gamma(0),x),[x,2/(2-r)]\}$. 
For $r>1$, any $x$ in the  red area is part of an equilibrium outcome for $K=
\{\{x\}, [0,x) \cup (x,2/(2-r) ]\}$. 
}

Next, we consider outcomes that maximize some familiar objective functions within the set of plausible outcomes. Formally, for a given objective function $W$, we solve the maximization problem:
\begin{equation}\label{eqproblemwelfarebisbis}
\tag{DP}
\text{    $\max  W\big(x,R_F(x)\big)~~$ s.t. $x$ is  plausible.}
\end{equation}
We first examine the  plausible outcomes associated with the largest attainable profit for one of the players (i.e., where $W=u$ or $W=v$). The Stackelberg outcome is plainly the best plausible outcome for the leader. On the other hand, since $v_2$ is here negative, the optimal plausible outcome for the follower involves the leader producing as little as plausibly possible. The proposition which follows summarizes these observations. In the rest of this section, $\underline{\underline{x}}$ (respectively, $\overline{\overline{x}}$) indicates the smallest (respectively, largest) plausible action of the leader.

\begin{proposition} \label{propdesign1} Suppose $u$ is given by \eqref{eqduopolypayoffs} and $v(y,x)=u(y,x)$.
If $W=u$, the unique solution of \eqref{eqproblemwelfarebisbis} is $x^S$.
If $W=v$, the unique solution of \eqref{eqproblemwelfarebisbis} is $\underline{\underline{x}}$.
\end{proposition}

The Stackelberg CST is optimal for the leader. The Cournot CST is optimal for the follower if and only if $\costpar \notin \big(2-\sqrt{2},1\big)$.

Next, we examine outcomes that maximize either consumer surplus, producer surplus, or total welfare (i.e., the sum of producer and consumer surplus). Consumer surplus and producer surplus are defined as
\begin{equation*}
CS(x,y) = \frac{(x+y)^2}{2},\;\;\;
PS(x,y)=u(x,y)+v(y,x).
\end{equation*}

\begin{proposition} \label{prop:3surpluses} Suppose $u$ is given by \eqref{eqduopolypayoffs} and $v(y,x)=u(y,x)$.
\begin{enumerate}[(i)]
\item If $W=CS$, the unique solution of \eqref{eqproblemwelfarebisbis} is 
$\overline{\overline{x}}$.
\item If $W=PS$, the unique solution of \eqref{eqproblemwelfarebisbis} is $x^C$ if $\costpar < \cbar$, and $x^S$ if $ \cbar<\costpar<1$;\footnote{$\cbar:=2-\frac{\sqrt[3]{ 3(9-\sqrt{78}) }}{3}-\frac{1}{\sqrt[3]{ 3(9-\sqrt{78}) }}$.} if $\costpar>1$ then the solutions are $x_3^C$ and $0$.\footnote{$x_3^C = \frac{1}{2-\costpar}$ is the highest of the three Cournot equilibria.}
\item If $W=CS+PS$, the unique solution of \eqref{eqproblemwelfarebisbis} is 
$\overline{\overline{x}}$.
\end{enumerate}
\end{proposition}

Part (i) of \Cref{prop:3surpluses} is explained as follows.\footnote{The proof of \Cref{prop:3surpluses}  is available upon request.} Firstly, we show that consumer surplus is a convex function of the quantity which the leader produces. The problem therefore reduces to choosing between $\underline{\underline{x}}$ and $\overline{\overline{x}}$. Inducing the leader to produce $\overline{\overline{x}}$ instead of $\underline{\underline{x}}$ is optimal as it exploits the strategic motive to produce large quantities. 

Part (ii) of \Cref{prop:3surpluses} is straightforward. With decreasing returns to scale, producer surplus is maximized by inducing both firms to produce the same quantity; in this case, the Cournot CST is producer-optimal. By contrast, with large returns to scale, producer surplus is maximized by letting one firm acquire a bigger market share than the other. In particular, for very large returns to scale, producer surplus is maximized when one firm acts as a monopolist. Consequently, the Cournot CST is producer-optimal for extreme returns to scale, whereas the Stackelberg CST is producer-optimal for sufficiently large returns to scale.

Part (iii) of \Cref{prop:3surpluses} follows from the fact that producer surplus tends to be less sensitive than consumer surplus to the quantity that the leader produces. Thus, maximizing total welfare implies maximizing consumer surplus.

\section{Conclusion}\label{sec: conclusion}

The Stackelberg leadership model assumes that the leader can commit to any action she might choose. Our paper takes a different view: we only assume that the leader can commit not to take certain subsets of actions.

We provide a tractable model of commitment that encompasses the Stackelberg and Cournot models as special cases but also enables us to capture situations of limited commitment. We characterize the set of outcomes resulting from all possible commitment structures, and shed light thereby on the ``limits of limited commitment". Our results highlight that, more than commitment, what matters is the precise \emph{form} that commitment takes. For instance, we show that whereas the Stackelberg and Cournot payoffs provide the bounds of the payoffs attainable by the leader under some appropriately defined class of ``simple" commitment structures, this property fails to hold more generally.

\bibliographystyle{apecon}
\bibliography{toomash-actiondisclosures}

\begin{thebibliography}{26}
\providecommand{\natexlab}[1]{#1}

\bibitem[{Admati and Perry(1991)}]{admati_joint_1991}
Admati, A.~R. and Perry, M. (1991) Joint {Projects} without {Commitment},
  \textit{Review of Economic Studies}, \textbf{58}, 259--276.

\bibitem[{Arieli \textit{et~al.}(2017)Arieli, Babichenko and
  Tennenholtz}]{arieli_sequential_2017}
Arieli, I., Babichenko, Y. and Tennenholtz, M. (2017) Sequential {Commitment}
  {Games}, \textit{Games and Economic Behavior}, \textbf{105}, 297--315.

\bibitem[{Bade \textit{et~al.}(2007)Bade, Haeringer and Renou}]{bade_more_2007}
Bade, S., Haeringer, G. and Renou, L. (2007) More {Strategies}, {More} {Nash}
  {Equilibria}, \textit{Journal of Economic Theory}, \textbf{135}, 551--557.

\bibitem[{Bade \textit{et~al.}(2009)Bade, Haeringer and
  Renou}]{bade_bilateral_2009}
Bade, S., Haeringer, G. and Renou, L. (2009) Bilateral {Commitment},
  \textit{Journal of Economic Theory}, \textbf{144}, 1817--1831.

\bibitem[{Barro and Gordon(1983)}]{barro_rules_1983}
Barro, R.~J. and Gordon, D.~B. (1983) Rules, {Discretion} and {Reputation} in a
  {Model} of {Monetary} {Policy}, \textit{Journal of Monetary Economics},
  \textbf{12}, 101--121.

\bibitem[{Bergemann and Morris(2016)}]{bergemann_bayes_2016}
Bergemann, D. and Morris, S. (2016) Bayes {Correlated} {Equilibrium} and the
  {Comparison} of {Information} {Structures} in {Games}, \textit{Theoretical
  Economics}, \textbf{11}, 487--522.

\bibitem[{Caruana and Einav(2008)}]{caruana_production_2008}
Caruana, G. and Einav, L. (2008) Production {Targets}, \textit{RAND Journal of
  Economics}, \textbf{39}, 990--1017.

\bibitem[{Doval and Ely(2020)}]{doval_sequential_2020}
Doval, L. and Ely, J.~C. (2020) Sequential {Information} {Design},
  \textit{Econometrica}, \textbf{88}, 2575--2608.

\bibitem[{Dutta and Ishii(2016)}]{dutta_dynamic_2016}
Dutta, R. and Ishii, R. (2016) Dynamic {Commitment} {Games}, {Efficiency} and
  {Coordination}, \textit{Journal of Economic Theory}, \textbf{163}, 699--727.

\bibitem[{Gallice and Monzón(2019)}]{gallice_co-operation_2019}
Gallice, A. and Monzón, I. (2019) Co-operation in {Social} {Dilemmas}
  {Through} {Position} {Uncertainty}, \textit{Economic Journal}, \textbf{129},
  2137--2154.

\bibitem[{Henkel(2002)}]{henkel_1.5th_2002}
Henkel, J. (2002) The 1.5th {Mover} {Advantage}, \textit{RAND Journal of
  Economics}, \textbf{33}, 156--170.

\bibitem[{Kalai \textit{et~al.}(2010)Kalai, Kalai, Lehrer and
  Samet}]{kalai_commitment_2010}
Kalai, A.~T., Kalai, E., Lehrer, E. and Samet, D. (2010) A {Commitment} {Folk}
  {Theorem}, \textit{Games and Economic Behavior}, \textbf{69}, 127--137.

\bibitem[{Kamada and Kandori(2020)}]{kamada_revision_2020}
Kamada, Y. and Kandori, M. (2020) Revision {Games}, \textit{Econometrica},
  \textbf{88}, 1599--1630.

\bibitem[{Kamenica and Gentzkow(2011)}]{kamenica_bayesian_2011}
Kamenica, E. and Gentzkow, M. (2011) Bayesian {Persuasion}, \textit{American
  Economic Review}, \textbf{101}, 2590--2615.

\bibitem[{Kydland and Prescott(1977)}]{kydland_rules_1977}
Kydland, F.~E. and Prescott, E.~C. (1977) Rules {Rather} than {Discretion}:
  {The} {Inconsistency} of {Optimal} {Plans}, \textit{Journal of Political
  Economy}, \textbf{85}, 473--492.

\bibitem[{Makris and Renou(2023)}]{makris_information_2023}
Makris, M. and Renou, L. (2023) Information {Design} in {Multi}-stage {Games},
  \textit{Theoretical Economics}, \textbf{18}, 1475--1509.

\bibitem[{Maskin and Tirole(1988)}]{maskin_theory_1988}
Maskin, E. and Tirole, J. (1988) A {Theory} of {Dynamic} {Oligopoly}, {I}:
  {Overview} and {Quantity} {Competition} with {Large} {Fixed} {Costs},
  \textit{Econometrica}, \textbf{56}, 549--569.

\bibitem[{Myerson(1997)}]{myerson_game_1997}
Myerson, R.~B. (1997) \textit{Game {Theory}: {Analysis} of {Conflict}}, Harvard
  University Press.

\bibitem[{Nishihara(1997)}]{nishihara_resolution_1997}
Nishihara, K. (1997) A {Resolution} of {N}-{Person} {Prisoners}' {Dilemma},
  \textit{Economic Theory}, \textbf{10}, 531--540.

\bibitem[{Pei(2016)}]{pei_when_2016}
Pei, H.~D. (2016) When {Does} {Restricting} {Your} {Opponent}’s {Freedom}
  {Hurt} {You}?, \textit{Games and Economic Behavior}, \textbf{100}, 234--239.

\bibitem[{Renou(2009)}]{renou_commitment_2009}
Renou, L. (2009) Commitment {Games}, \textit{Games and Economic Behavior},
  \textbf{66}, 488--505.

\bibitem[{Romano and Yildirim(2005)}]{romano_endogeneity_2005}
Romano, R. and Yildirim, H. (2005) On the {Endogeneity} of {Cournot}-{Nash} and
  {Stackelberg} {Equilibria}: {Games} of {Accumulation}, \textit{Journal of
  Economic Theory}, \textbf{120}, 73--107.

\bibitem[{Salcedo(2017)}]{salcedo_implementation_2017}
Salcedo, B. (2017) Implementation {Without} {Commitment} in {Moral} {Hazard}
  {Environments}, manuscript.

\bibitem[{Saloner(1987)}]{saloner_cournot_1987}
Saloner, G. (1987) Cournot {Duopoly} {With} {Two} {Production} {Periods},
  \textit{Journal of Economic Theory}, \textbf{42}, 183--187.

\bibitem[{Spence(1977)}]{spence_entry_1977}
Spence, A.~M. (1977) Entry, {Capacity}, {Investment} and {Oligopolistic}
  {Pricing}, \textit{Bell Journal of Economics}, \textbf{8}, 534--544.

\bibitem[{Tirole(1988)}]{tirole_theory_1988}
Tirole, J. (1988) \textit{The {Theory} of {Industrial} {Organization}}, The MIT
  Press, 1st edn.

\end{thebibliography}

\newpage

\appendix

\section{Appendix of Section \ref{sec: mainone}  
   } \label{A:intervals}

\begin{proof}[Proof of Lemma \ref{lem:extension}] 
We prove the \textit{only if} part of the lemma; the proof of the other part is similar. 
Suppose  that
 $(K,\beta)$ constitutes an admissible pair.
 Reason by contradiction, and suppose that we can find  $\mathcal{X}_i\in K$ such that 
 $ \phi\big(\beta(\mathcal{X}_i)\big)< \beta(\mathcal{X}_i)$ while 
  $\beta(\mathcal{X}_i) \neq \min \mathcal{X}_i$.
 The function 
 $\eta\big(\cdot,  \beta(\mathcal{X}_i)\big)$ is strictly quasi-concave, maximized at  $\phi\big(\beta(\mathcal{X}_i)\big)$, and satisfies 
 $\eta\big( \beta(\mathcal{X}_i),  \beta(\mathcal{X}_i)\big)=0$.
So $\eta\big(x,  \beta(\mathcal{X}_i)\big)>0$ for all $x \in \big[  \phi\big(\beta(\mathcal{X}_i) \big), \beta(\mathcal{X}_i) \big)$. 
Since 
$\mathcal{X}_i$ is an interval, 
  $\beta(\mathcal{X}_i) \in \mathcal{X}_i$, and 
  $\beta(\mathcal{X}_i) \neq \min \mathcal{X}_i$, we can find $\varepsilon>0$ such that 
  $\big(  \beta(\mathcal{X}_i)-\varepsilon, \beta(\mathcal{X}_i) \big) \subset \mathcal{X}_i$. Coupling the previous remarks shows the existence of $x \in  \mathcal{X}_i$ such that 
 $\eta\big(x,  \beta(\mathcal{X}_i)\big)>0$; this contradicts the assumption that 
 $(K,\beta)$ is admissible. Hence, 
 $ \phi\big(\beta(\mathcal{X}_i)\big)< \beta(\mathcal{X}_i)$ implies $\beta(\mathcal{X}_i)=\min \mathcal{X}_i$.
 Analogous arguments show that 
 $\phi\big(\beta(\mathcal{X}_i)\big)> \beta(\mathcal{X}_i)$ implies $\beta(\mathcal{X}_i)=\max \mathcal{X}_i$.
 \end{proof}

\begin{proof} [Proof of Theorem \ref{thm3}]
The \textit{if} part of the theorem was proven in the text; we prove here the  converse.
In the rest of the appendix, 
the upper contour set of $x$ with respect to $U$ will be denoted by $\mathcal{Q}_{\geq}(x)$, that is, 
\[
\mathcal{Q}_{\geq}(x):= \big\{ \tilde{x}: ~ U(\tilde{x}) \geq U(x) \big\}. 
\]
The sets $\mathcal{Q}_{<}(x)$, $\mathcal{Q}_{ \leq }(x)$, and $\mathcal{Q}_{>}(x)$ are similarly defined.

Pick an arbitrary 
  simply-plausible action $x^*$.
 We aim to  prove the existence of 
  a Cournot action $x_{n^*}^C\in \mathcal{Q}_{\leq} (x^*)$  such that
\eqref{eq:thm1} holds. If  $\phi(x^*)=x^*$, just take $x_{n^*}^C=x^*$; we treat below   the case in which
$\phi(x^*)>x^*$ (the remaining  case is analogous). Reason by contradiction, and suppose that
\begin{equation}\label{eqemrlondon}
    \mathcal{X}^C \cap (x^*, \overline{x}] \cap   \mathcal{Q}_{\leq} (x^*) =\varnothing.
\end{equation}
Let   $(K,\beta)$ be an admissible pair that implements 
 $x^*$, with $K$  a simple CST.   We will  show that 
 $K$ cannot be    finite.
   By Berge's maximum theorem,
both $R_F$ and $R_L$  are continuous, thus
 $\phi$ is continuous as well.
  As $\phi(x^*)>x^*$ and $\phi(\overline{x}) \leq \overline{x}$, the intermediate value theorem shows that 
    \[
    \mathcal{X}^C  \cap (x^*, \overline{x}] \neq \varnothing.
    \] 
    Note that
the continuity of the function $\phi$ implies    the compactness of  $\mathcal{X}^C$. So   $\mathcal{X}^C  \cap (x^*, \overline{x}] = \mathcal{X}^C  \cap [x^*, \overline{x}]$ possesses  a smallest element, that we denote by 
 $x_1^C$. Let $\mathcal{X}_1$ be the member  of $K$  containing $x_1^C$.
Then  Lemma 
\ref{lem:extension}  combined with \eqref{eqemrlondon}  gives 
\[
\beta(\mathcal{X}_1) \in (x_1^C, \overline{x}] \cap \big\{x: \phi(x)>x\big\}.
\]
Now let  $x_2^C$ be the smallest Cournot action greater than $
\beta(\mathcal{X}_1)$, and  denote by $\mathcal{X}_2$ the member  of $K$  containing $x_2^C$. The same logic as above gives  
$\beta(\mathcal{X}_2) \in (x_2^C, \overline{x}] \cap \big\{x: \phi(x)>x\big\}$, and so on.
If $K $ were finite, the previous iteration would have to  end after finitely many steps, say $m$. But then $\beta(\mathcal{X}_m)=\overline{x}$ and 
  $\beta(\mathcal{X}_m) \in  \big\{x: \phi(x)>x\big\}$, 
 giving $\phi(\overline{x})>\overline{x}$. The previous contradiction
  proves that 
$K$ cannot be  finite.

We proceed to  show that $K$ cannot be  infinite either.
The function $U$ is continuous and, by  \eqref{eqemrlondon}, 
$U(x_n^C)>U(x^*)$ for all $x_n^C \in \mathcal{X}^C \cap (x^*, \overline{x}]$. 
Furthermore, as already pointed out above, 
$\mathcal{X}^C \cap (x^*, \overline{x}]$ is  a compact set. 
Therefore, 
\begin{equation} \label{eqharrogate1}
\Delta:= \min_{x_n^C \in\; \mathcal{X}^C \cap (x^*, \overline{x}]} U(x_n^C) -U(x^*)>0.
\end{equation}

Next, $U$ being continuous and $\mathcal{X}$ compact, the function $U$ is uniformly continuous on $\mathcal{X}$. 
We can thus find $\delta>0$
such that 
$|U(x')-U(x)|<\Delta$ whenever  $|x'-x|<\delta$.
By \eqref{eqharrogate1}, we thus have
\begin{equation}\label{eqharrogate2}
U(x)>U(x^*),~ \text{for all $x$ such that  $|x-x_n^C|<\delta$, $x_n^C \in\; \mathcal{X}^C \cap (x^*, \overline{x}]$.}
\end{equation}

Now, since  $(K, \beta)$ implements $x^*$,  we must have  $U\big(\beta(\mathcal{X}_i)\big) \leq U(x^*)$ for all $\mathcal{X}_i \in K$.   
So \eqref{eqharrogate2} shows that each member of the sequence $\mathcal{X}_1, \mathcal{X}_2, \dots$ defined in the first part of the proof 
must have a length $\delta$ or more. This in turn implies that 
said sequence can have no more than $\frac{\overline{x}-x^*}{\delta}$ terms.
Yet we showed previously that this sequence cannot be finite. This contradiction completes the proof of the theorem.
\end{proof}

\section{Appendix of Subsection \ref{subsecIplaus}  } \label{A:subsecIplaus}

\begin{proof} [Proof of Theorem \ref{thm:int_covers}] 
 We prove here the  \textit{only if} part of the theorem; the rest was proven in the text.
Pick an arbitrary action $x^*$ of the leader.
Suppose that  $\mathcal{Q}_{\leq}(x^*) \cap \big\{x  : \phi(x) \leq x \big\} = \varnothing$.
Applying  Lemma 
\ref{lem:extension} shows that any admissible pair
$(K,\beta)$ in which the CST  $K$ contains only intervals
 must be such that 
$\beta(\mathcal{X}_i) \in   \big\{x  : \phi(x) \leq  x \big\}$ for every $\mathcal{X}_i \in K$ containing $\overline{x}$.
This, in turn, implies that every  
I-plausible action belongs to 
  $\mathcal{Q}_{>}(x^*)$, whence $x^*$ cannot be  I-plausible. A similar argument shows that  $\mathcal{Q}_{\leq}(x^*) \cap \big\{x  : \phi(x) \geq x \big\} = \varnothing$ implies that $x^*$ is not  I-plausible. 
Next, suppose that 
$\mathcal{Q}_{\leq}(x^*) \cap \big\{x  : \phi(x) \leq x \big\}$ and  $\mathcal{Q}_{\leq}(x^*) \cap  \big\{x : \phi(x) \geq x \big\}$ are  non-empty. 
Both $\phi$ and $U$ being continuous, the min and max of \eqref{eqzaneetti1} are in this case well defined (since $\mathcal{X}$ is a compact set). Suppose that  $\max  \mathcal{Q}_{\leq}(x^*) \cap \big\{x  : \phi(x) \geq x \big\}<  \min \mathcal{Q}_{\leq}(x^*) \cap \big\{x  : \phi(x) \leq x \big\}$, 
 and
pick
\begin{equation}\label{eqzaneetti2}
x^\dagger   \in \big(\max  \mathcal{Q}_{\leq}(x^*) \cap \big\{x  : \phi(x) \geq x \big\},  \min \mathcal{Q}_{\leq}(x^*) \cap \big\{x  : \phi(x) \leq x \big\}  \big).
\end{equation} 
Applying  Lemma 
\ref{lem:extension} shows that any admissible pair
$(K,\beta)$ comprising an interval CST must be such that, 
 for every $\mathcal{X}_i \in K$ containing $x^\dagger$, either (i)
$\beta(\mathcal{X}_i) \in   \big\{x\geq x^\dagger  : \phi(x) \geq  x \big\}$ or (ii)
$\beta(\mathcal{X}_i) \in   \big\{x\leq x^\dagger  : \phi(x) \leq  x \big\}$. So  \eqref{eqzaneetti2} gives 
$\beta(\mathcal{X}_i) \in  \mathcal{Q}_{>}(x^*)$. It ensues that  $x^*$ cannot be  I-plausible.
\end{proof}

\begin{proof}[Proof of Proposition \ref{cor:2}] 
By \Cref{corocoussinmoel}, an action that belongs to the upper contour set of some Cournot action is I-plausible. Below we show that the converse is true too if $U$ is either quasi-convex or quasi-concave. 

Suppose that  $U$ is quasi-convex, and consider an action $x^*$  in the strict lower contour set of every Cournot action. Then $\mathcal{Q}_{\leq}(x^*)$ is a convex set, and $\phi(x)\neq x$ for all $x \in \mathcal{Q}_{\leq}(x^*)$. The intermediate value theorem shows that either $x<\phi(x)$ for all $x\in \mathcal{Q}_{\leq}(x^*)$, or  $x>\phi(x)$ for all $x\in \mathcal{Q}_{\leq}(x^*)$. Either way, \Cref{thm:int_covers} shows that  $x^*$ cannot be I-plausible.
 
Next, suppose that $U$ is quasi-concave, and consider an action $x^*$  in the strict lower contour set of every Cournot action. Then $\mathcal{Q}_{>}(x^*)$ is a convex set, and $\phi(x)\neq x$ for all $x \in \mathcal{Q}_{\leq}(x^*)$. This implies that, given $x\in \mathcal{Q}_{\leq}(x^*)$, either (i) $\phi(x)>x$ and  $x<x^C_n$ for all $x^C_n\in \mathcal{X}^C$, or  (ii) $\phi(x)<x$ and  $x>x^C_n$ for all $x^C_n\in \mathcal{X}^C$. We conclude using \Cref{thm:int_covers} that $x^*$ is not I-plausible. 
\end{proof}

\section{Appendix of Subsection \ref{subsecPplaus}  } \label{A:subsecPplaus}

\begin{proof}[Proof of \Cref{P:ref1conjecture}:] 
%
The \textit{if} part of the proposition is evident. We prove here the \textit{only if} part. 
Let $(x^*,y^*)$ be plausible, and  $(K, \beta)$ be  an admissible pair that implements $(x^*,y^*)$. Let $B$ denote the image of the function $\beta$, that is, $B:=\beta(K)$. For each action $x \in \mathcal{X}\setminus B $, let 
\[
B_x:=\bigcup_{\text{$\mathcal{X}_i \in K$ s.t. $ x \in \mathcal{X}_i$}} \{\beta(\mathcal{X}_i)\},
\]
that is, $B_x$ is the set of all leader's actions played in some subgame containing $x$. If $K$ is not a partition, $B_x$ may contain more than one element. For all  $x\in \mathcal{X}\setminus B $, let $\rho(x)$ be one element of $B_x$.%
\footnote{Using the axiom of choice, we define a mapping $\hat{\rho}$ that selects $\hat{\rho}(B_x) \in B_x$ for each $x \in \mathcal{X} \setminus B$. Then, we define $\rho : \mathcal{X} \setminus B \to B$ by setting $\rho(x) = \hat{\rho}(B_x)$.}
For all $b \in B$, 
we can now define the set
\[
[b]:= \{b\} \cup \bigl\{ x \in \gX \mid \rho(x) = b \bigr\}.\]
Finally,  let
\[
K':=\{[x] \mid x \in B\}, 
\]
and  $\beta': K' \rightarrow \mathcal{X}$ such that $\beta'([x])=x$, for all $x \in B$. By construction, $K'$ is a partition of $\gX$.

Since $(K, \beta)$ is admissible, we have 
$ \eta(\tilde{x},x)\leq 0$ for all $\tilde{x} \in B_x$,
which implies that the pair $(K', \beta')$ is admissible as well. Moreover, it implements $(x^*, y^*)$ since $\beta'(K')=B=\beta(K)$.
\end{proof}

\begin{lemma} \label{lemmapouvoirachat}
Suppose (RC1)--(RC3) hold. If $ u_2 u_{12}>0$, then $U$ is increasing over $[ \underline{x}, x^C]$. If $ u_2 u_{12}<0$, then $U$ is decreasing over $  [x^C, \overline{x}]$.
 \end{lemma}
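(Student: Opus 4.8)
The plan is to differentiate $U$ and decompose its derivative into a \emph{direct} effect and a \emph{strategic} effect. Writing $U(x)=u\big(x,R_F(x)\big)$ and differentiating gives
\[
U'(x)=\underbrace{u_1\big(x,R_F(x)\big)}_{\text{direct}}+\underbrace{u_2\big(x,R_F(x)\big)\,R_F'(x)}_{\text{strategic}}.
\]
First I would sign the direct term. Since $u_{11}<0$, the map $u\big(\cdot,R_F(x)\big)$ is strictly concave with maximizer $\phi(x)=R_L\big(R_F(x)\big)$, so $u_1\big(x,R_F(x)\big)>0$ exactly when $x<\phi(x)$ and $<0$ exactly when $x>\phi(x)$. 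It then remains to locate $\phi(x)-x$ relative to $x^C$. Under (RC1) the only fixed point of the continuous map $\phi$ is $x^C\in\interior{\mathcal{X}}$; moreover $\phi(\underline{x})\geq\underline{x}$ and $\phi(\overline{x})\leq\overline{x}$ because $\phi$ takes values in $\mathcal{X}$, and neither endpoint can be the (interior) fixed point. Hence $\phi(\underline{x})>\underline{x}$ and $\phi(\overline{x})<\overline{x}$, and by continuity together with uniqueness of the fixed point, $\phi(x)>x$ on $[\underline{x},x^C)$ and $\phi(x)<x$ on $(x^C,\overline{x}]$. Consequently the direct term is strictly positive on $[\underline{x},x^C)$ and strictly negative on $(x^C,\overline{x}]$.

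Next I would sign the strategic term. Differentiating the follower's first-order condition $v_1\big(R_F(x),x\big)=0$ yields $R_F'(x)=-v_{12}/v_{11}$ (evaluated at $(R_F(x),x)$), so $v_{11}<0$ gives $\operatorname{sign}\big(u_2 R_F'\big)=\operatorname{sign}\big(u_2 v_{12}\big)$. By (RC3), $v_{12}$ has the same sign as $u_{12}$ everywhere, whence $\operatorname{sign}\big(u_2 R_F'\big)=\operatorname{sign}\big(u_2 u_{12}\big)$. Note that (RC2) and (RC3) force $u_2$ and $u_{12}$ to be nonvanishing and of constant sign, so $u_2 u_{12}$ has a well-defined constant sign throughout $\mathcal{X}\times\mathcal{Y}$ and the two cases of the lemma are exhaustive.

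It then suffices to combine the two effects. If $u_2 u_{12}>0$, the strategic term is nonnegative, and adding it to the strictly positive direct term shows $U'>0$ on $[\underline{x},x^C)$; hence $U$ is increasing on $[\underline{x},x^C]$. If $u_2 u_{12}<0$, the strategic term is nonpositive, and on $(x^C,\overline{x}]$ the direct term is strictly negative, so $U'<0$ there and $U$ is decreasing on $[x^C,\overline{x}]$. I expect the main obstacle to lie in the direct term: pinning down the sign of $\phi(x)-x$ on each side of $x^C$ relies on the uniqueness-and-boundary argument above, and one must also check that possibly non-interior follower best responses cause no difficulty — but wherever $R_F$ hits a boundary one has $R_F'=0$, so the strategic term simply vanishes and the sign of $U'$ is still governed by the direct term.
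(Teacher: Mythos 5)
Your proof is correct and follows essentially the same route as the paper's: both decompose the change in $U$ into a direct effect (signed by $\phi(x)-x$, using strict concavity of $u(\cdot,R_F(x))$ and the fact that $x^C$ is the unique fixed point of $\phi$) and a strategic effect (signed by $u_2u_{12}$ via the monotonicity of $R_F$). The only difference is that the paper runs the argument discretely --- $U(x+\varepsilon)=u\big(x+\varepsilon,R_F(x+\varepsilon)\big)\geq u\big(x+\varepsilon,R_F(x)\big)>u\big(x,R_F(x)\big)$, using only that $R_F$ is non-decreasing --- which sidesteps the possible kinks of $R_F$ where the follower's best response hits the boundary of $\mathcal{Y}$, a point your differential version has to patch separately.
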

 
\begin{proof}
We show the proof for the case in which $u_2>0$ and $u_{12}>0$; the other cases are similar.\footnote{Recall, $u_2 >0$ is shorthand  notation for positive externalities and $u_{12}>0$ for supermodular payoffs.
}
Pick an arbitrary   $x< x^C$, and $\varepsilon>0$ sufficiently small that $u\big(x+\varepsilon, R_F(x)\big)> u\big(x, R_F(x)\big)$.\footnote{The function 
$u\big(\cdot, R_F(x)\big)$ being strictly quasi-concave and maximized at $\phi(x)$,
it ensues that
$x<x^C$ implies  $\eta( x+\varepsilon,x)>0$ for all  sufficiently small $\varepsilon>0$.}
Then, $R_F$ being non-decreasing (since $v_{12}>0$)  and $u_2>0$:
\begin{equation*}\label{eqcafelatte1bis}
U(x+\varepsilon)=u\big(x+\varepsilon, R_F(x+\varepsilon) \big) \geq u\big(x+\varepsilon, R_F(x) \big)> u\big(x, R_F(x) \big)=U(x).
\end{equation*}
\end{proof}

\begin{lemma} \label{lemmaHMRevenue}
Suppose (RC1)--(RC3) hold. Then
\begin{equation} \label{eqcocodrink}
\mathcal{S}=  \big\{ x: \; \eta( x^C,  x) \leq 0 \big\}   \cap \big\{x : \;  u\big(x^C, R_F(x)\big) \leq  U(x^C) \big\}.
\end{equation}
\end{lemma}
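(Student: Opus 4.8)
The plan is to treat the two sign regimes in the definition of $\mathcal{S}$ separately; since they are mirror images, I would carry out the case $u_2 u_{12}>0$ in full and invoke symmetry for $u_2 u_{12}<0$. Throughout I abbreviate $A:=\big\{x:\eta(x^C,x)\le 0\big\}$ and $B:=\big\{x:u(x^C,R_F(x))\le U(x^C)\big\}$, so that the identity to be shown reads $\mathcal{S}=A\cap B$.

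The first ingredient locates $x^C$ relative to $x$ and $\gamma(x)$. Under (RC1) the observation recorded in the proof of \Cref{propferrarelleB} gives $\big\{x:\phi(x)\ge x\big\}=[\underline{x},x^C]$ and $\big\{x:\phi(x)\le x\big\}=[x^C,\overline{x}]$, so $\phi(x)>x$ for $x<x^C$ and $\phi(x)<x$ for $x>x^C$. Since $\eta(\cdot,x)$ is strictly concave, vanishes at $x$, and is maximized at $\phi(x)$, its second root $\gamma(x)$ lies on the far side of $\phi(x)$ from $x$; hence $\gamma(x)>x$ whenever $x<x^C$ (this persists in the default case $\gamma(x)=\overline{x}$) and $\gamma(x)<x$ whenever $x>x^C$, while $\gamma(x^C)=x^C$. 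Consequently $\big\{x:x\le\gamma(x)\big\}=[\underline{x},x^C]$, and I can rewrite $\mathcal{S}=\big\{x:x\le\gamma(x)\le x^C\big\}$ as $\big\{x\le x^C\big\}\cap\big\{\gamma(x)\le x^C\big\}$.

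The second ingredient identifies $B$. Differentiating, $\frac{d}{dx}u(x^C,R_F(x))=u_2\big(x^C,R_F(x)\big)\,R_F'(x)$. By (RC2) the sign of $u_2$ is constant; from the follower's first-order condition $R_F'=-v_{12}/v_{11}$, and $v_{11}<0$ together with (RC3) give $\operatorname{sign}R_F'=\operatorname{sign}v_{12}=\operatorname{sign}u_{12}$. Hence the derivative carries the constant sign of $u_2u_{12}>0$, so $x\mapsto u(x^C,R_F(x))$ is strictly increasing and, as it equals $U(x^C)$ at $x=x^C$, I obtain $B=\big\{x\le x^C\big\}$. It then remains to compare $A$ with $\big\{\gamma(x)\le x^C\big\}$ on $B$: for $x\le x^C$ the region on which $\eta(\cdot,x)>0$ is exactly the open interval $(x,\gamma(x))$, so since $x<x^C\le\overline{x}$, membership $x\in A$ is equivalent to $x^C\ge\gamma(x)$, with the boundary subcase $\gamma(x)=\overline{x}>x^C$ (using $x^C\in\interior{\mathcal{X}}$) giving $x\notin A$ and $\gamma(x)>x^C$ consistently, and $x=x^C$ giving $\eta(x^C,x^C)=0$ with $\gamma(x^C)=x^C$. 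Thus on $B$ one has $x\in A\iff\gamma(x)\le x^C$, whence $A\cap B=\big\{x\le x^C\big\}\cap\big\{\gamma(x)\le x^C\big\}=\mathcal{S}$. The regime $u_2u_{12}<0$ follows by the identical argument with all inequalities reversed, yielding $B=\big\{x\ge x^C\big\}$ and $A\cap B=\big\{x^C\le\gamma(x)\le x\big\}=\mathcal{S}$.

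I expect the only delicate point to be the bookkeeping in the final comparison: verifying that the equivalence $\eta(x^C,x)\le 0\iff\gamma(x)\le x^C$ survives the boundary cases in which $\gamma$ is pinned down by its default values rather than by an interior second root of $\eta(\cdot,x)$. Everything else is routine sign-tracking driven by (RC1)--(RC3).
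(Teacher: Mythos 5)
Your proof is correct and follows essentially the same route as the paper's: pin down the set $\big\{x:\,u(x^C,R_F(x))\le U(x^C)\big\}$ as a half-interval at $x^C$ via the monotonicity of $R_F$ and the constant sign of $u_2$, then use the strict concavity of $\eta(\cdot,x)$ and the position of $\phi(x)$ relative to $x$ and $x^C$ to translate $\eta(x^C,x)\le 0$ into $\gamma(x)\le x^C$ (your version is in fact slightly more complete, since it treats both inclusions and the default branches of $\gamma$ explicitly, where the paper labels the reverse inclusion ``analogous''). The one step to soften is the claim that $x\mapsto u(x^C,R_F(x))$ is \emph{globally} strictly increasing via $R_F'=-v_{12}/v_{11}$: that formula presupposes an interior best response, and $R_F$ may saturate at $\partial\mathcal{Y}$ (it does in the paper's own duopoly example), so you only get ``non-decreasing everywhere and strictly increasing in a neighborhood of $x^C$'' (the latter from (RC1), $y^C\in\interior{\mathcal{Y}}$, together with $v_{12}\neq 0$) --- which is exactly what the paper uses and is all your argument needs to conclude $B=\{x\le x^C\}$.
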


\begin{proof}
We show the proof of the lemma for the case $u_2>0$ and $u_{12}>0$ (the other cases are similar).
Recall that in this case 
$\mathcal{S}:=
 \big\{ x:\;  x \leq  \gamma(x) \leq x^C  \big\}$.

The function   $R_F$ being in this case non-decreasing (and, indeed, increasing in a neighborhood of $x^C$ since $y^C \in \text{int}  (\mathcal{Y})$)
and $u_2>0$, notice that 
\[
 u\big(x^C, R_F(x)\big) >u \big(x^C, R_F(x^C)\big)=U(x^C), ~~\text{for all $x>x^C$}.
\]
So 
   $u\big(x^C, R_F(x)\big) \leq  U(x^C)$  implies $x\leq x^C$.
Now consider $x\leq x^C$ such that  $\eta( x^C,  x) \leq 0$.
We will show that $x \in \mathcal{S}$. If $x=x^C$ the previous claim  is immediate, so
pick
 $x< x^C$.
 The function  
 $\eta(\cdot, x)$ is 
 strictly quasi-concave, and maximized    at $\phi(x)>x$.\footnote{As
 $\phi$ is continuous, notice that
\begin{equation}\label{ineqrep}
\begin{cases}
  \phi(x)>x & \text{for $x<x^C$},\\
   \phi(x)<x & \text{for $x>x^C$}.
          \end{cases}     
\end{equation}
  }
 As $\eta(x,x)=0 \geq \eta(x^C, x)$, we see  by definition of $\gamma(x)$ that $x< \gamma(x)\leq x^C$.
 The right-hand side of \eqref{eqcocodrink} is thus contained in the set  $\mathcal{S}$. The proof of the 
 reverse inclusion is analogous.
\end{proof}

\begin{lemma}\label{lemmaonyxtabl}
Suppose (RC1)--(RC3) hold, and $\mathcal{S}=\{x^C\}$. Then all plausible actions  belong to  the 
upper contour set of $x^C$ with respect to $U$. 
\end{lemma}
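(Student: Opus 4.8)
The plan is to argue by contraposition: I will show that any action $x^*$ with $U(x^*) < U(x^C)$ cannot be plausible. Supposing toward a contradiction that such an $x^*$ is plausible, I intend to exhibit a point of $\mathcal{S}$ distinct from $x^C$, which directly contradicts the hypothesis $\mathcal{S} = \{x^C\}$. The two tools I rely on are \Cref{rem:remark1}, which recasts plausibility in terms of admissible pairs, and \Cref{lemmaHMRevenue}, which supplies the explicit description $\mathcal{S} = \{x: \eta(x^C, x) \leq 0\} \cap \{x: u(x^C, R_F(x)) \leq U(x^C)\}$.

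First I would invoke \Cref{rem:remark1} to obtain an admissible pair $(K, \beta)$ and an element $\mathcal{X}_i \in K$ with $x^* = \beta(\mathcal{X}_i)$ and $U(\beta(\mathcal{X}_i)) = \max_{\mathcal{X}_j \in K} U(\beta(\mathcal{X}_j))$. In particular $U(\beta(\mathcal{X}_j)) \leq U(x^*) < U(x^C)$ for every $\mathcal{X}_j \in K$. Since $K$ covers $\mathcal{X}$, some element $\mathcal{X}_{i_0}$ contains $x^C$; write $z := \beta(\mathcal{X}_{i_0})$.

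The crux is to verify that $z \in \mathcal{S}$. Admissibility condition (b), applied at the point $x^C \in \mathcal{X}_{i_0}$, gives $\eta(x^C, z) \leq 0$, which unwinds to $u(x^C, R_F(z)) \leq u(z, R_F(z)) = U(z)$. Chaining this with $U(z) \leq U(x^*) < U(x^C)$ delivers simultaneously $\eta(x^C, z) \leq 0$ and $u(x^C, R_F(z)) \leq U(x^C)$, i.e., the two defining conditions in the description of $\mathcal{S}$ furnished by \Cref{lemmaHMRevenue}; hence $z \in \mathcal{S}$. On the other hand, $U(z) < U(x^C)$ forces $z \neq x^C$, contradicting $\mathcal{S} = \{x^C\}$. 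This contradiction shows that no plausible action can have $U$-value strictly below $U(x^C)$, which is exactly the claim.

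I expect the only genuinely delicate point to be the bookkeeping of inequalities in the penultimate step, namely translating the single admissibility constraint at $x^C$ into membership of $z$ in both sets intersected in \Cref{lemmaHMRevenue}; once that translation is made, the argument is uniform across the two sign cases of $u_2 u_{12}$, since \Cref{lemmaHMRevenue} already packages those cases into one characterization. I would also flag that the argument never uses that $K$ is a partition or consists of intervals, only that it covers $\mathcal{X}$, so it applies to arbitrary commitment structures and hence to \emph{all} plausible actions as required.
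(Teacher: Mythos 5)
Your proof is correct and follows essentially the same route as the paper's: both arguments take a plausible $x^*$ with $U(x^*)<U(x^C)$, look at the $\beta$-value $z$ of a cover element containing $x^C$, use admissibility at $x^C$ together with the implementation inequality $U(z)\leq U(x^*)<U(x^C)$ to place $z$ in both sets of the \Cref{lemmaHMRevenue} characterization of $\mathcal{S}$, and conclude $z\in\mathcal{S}\setminus\{x^C\}$, contradicting $\mathcal{S}=\{x^C\}$. The bookkeeping of inequalities you flag as the delicate step is handled correctly.
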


\begin{proof} Reason by contradiction, and suppose that
some action 
$x^* \in  \mathcal{Q}_{<}(  x^C)$
is  plausible.
Let 
$(K, \beta)$ be an admissible pair that implements $x^*$. Choose   an element  $\mathcal{X}_i$ of the CST $K$ such that    $x^C \in  \mathcal{X}_i$. Using
 Lemma \ref{rem:remark1}  yields
$\beta(\mathcal{X}_i) \in ~    \big\{ x: \; \eta( x^C,  x) \leq 0 \big\} \cap \mathcal{Q}_{\leq}(  x^*)$, 
and, since $x^* \in  \mathcal{Q}_{<}(  x^C)$,
\begin{equation}\label{eqdevereorchard1}
\beta(\mathcal{X}_i) \in ~    \big\{ x: \; \eta( x^C,  x) \leq 0 \big\} \cap \mathcal{Q}_{<}(  x^C).
\end{equation}
 In turn, \eqref{eqdevereorchard1} 
 yields
\begin{equation}\label{eqdevereorchard2}
u\Big(x^C, R_F\big(\beta(\mathcal{X}_i)\big)\Big) \leq u\Big(\beta(\mathcal{X}_i),R_F\big(\beta(\mathcal{X}_i)\big) \Big)= U\big(\beta(\mathcal{X}_i)\big)<U(x^C).
\end{equation}
Coupling 
 \eqref{eqdevereorchard1} and \eqref{eqdevereorchard2} gives
\[
\beta(\mathcal{X}_i) \in   \big\{ x: \; \eta( x^C,  x) \leq 0 \big\}   \cap \big\{x : \;  u\big(x^C, R_F(x)\big)< U(x^C) \big\}.
\] 
Applying 
 \Cref{lemmaHMRevenue}, we obtain $
\beta(\mathcal{X}_i) \in \mathcal{S} \setminus \{x^C \}$, contradicting $\mathcal{S}=\{x^C\}$.
\end{proof}

\begin{lemma}\label{lemmaemercall1}
Suppose (RC1)--(RC3) hold. Assume $u_{12}>0$ and $u_2>0$. Consider 
an admissible pair 
 $(K, \beta)$ which   implements some action $x^*$. Then, if
 $x \in \mathcal{Q}_{>}(  x^*)$, we have
 $\beta(\mathcal{X}_i )<x$ for every  
$\mathcal{X}_i \in K$ which contains $x$. 
\end{lemma}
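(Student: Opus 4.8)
The plan is to prove the contrapositive-style statement directly by contradiction. Suppose $(K,\beta)$ implements $x^*$, fix some $x \in \mathcal{Q}_{>}(x^*)$ (so $U(x) > U(x^*)$), and let $\mathcal{X}_i \in K$ be any element containing $x$. I want to show $\beta(\mathcal{X}_i) < x$. The natural move is to suppose for contradiction that $\beta(\mathcal{X}_i) \geq x$ and derive a violation of admissibility, specifically a violation of condition (b) in the definition of admissible pair, namely $\eta\big(x, \beta(\mathcal{X}_i)\big) \leq 0$ for all $x$ in $\mathcal{X}_i$.

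The key structural fact to exploit is that under $u_{12}>0$ and $u_2>0$, Lemma \ref{lemmapouvoirachat} gives that $U$ is increasing over $[\underline{x}, x^C]$, and the inequalities in \eqref{ineqrep} tell us $\phi(x)>x$ for $x<x^C$ and $\phi(x)<x$ for $x>x^C$. First I would pin down where $x$ sits relative to $x^C$: since $(K,\beta)$ implements $x^*$, Lemma \ref{rem:remark1} forces $U\big(\beta(\mathcal{X}_j)\big) \leq U(x^*)$ for all $\mathcal{X}_j$, so in particular $U\big(\beta(\mathcal{X}_i)\big) \leq U(x^*) < U(x)$, i.e.\ $\beta(\mathcal{X}_i) \in \mathcal{Q}_{<}(x)$. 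The plan is then to use the monotonicity of $U$ on $[\underline{x}, x^C]$ together with the location of $x$ to argue that $\beta(\mathcal{X}_i)$ cannot lie weakly above $x$ without contradicting either the payoff inequality $U(\beta(\mathcal{X}_i)) < U(x)$ or the admissibility inequality $\eta(x, \beta(\mathcal{X}_i)) \leq 0$.

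Concretely, suppose $\beta(\mathcal{X}_i) \geq x$. Admissibility requires $\eta\big(x, \beta(\mathcal{X}_i)\big) \leq 0$, i.e.\ $u\big(x, R_F(\beta(\mathcal{X}_i))\big) \leq U\big(\beta(\mathcal{X}_i)\big)$. The function $\eta(\cdot, \beta(\mathcal{X}_i))$ is strictly concave and maximized at $\phi(\beta(\mathcal{X}_i))$. If $\beta(\mathcal{X}_i) \leq x^C$, then $\phi(\beta(\mathcal{X}_i)) \geq \beta(\mathcal{X}_i) \geq x$, but then on $[x, \beta(\mathcal{X}_i)]$ the concave function $\eta(\cdot,\beta(\mathcal{X}_i))$ lies strictly above its endpoint values between the maximizer and $\beta(\mathcal{X}_i)$, forcing $\eta(x, \beta(\mathcal{X}_i)) > 0$ — a contradiction — unless $x = \beta(\mathcal{X}_i)$, which is ruled out since $U(\beta(\mathcal{X}_i)) < U(x)$. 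If instead $\beta(\mathcal{X}_i) > x^C$, then the strict monotonicity together with $\eqref{ineqrep}$ and Lemma \ref{lemmapouvoirachat} must be combined to locate $x$ below $x^C$ and again extract a sign contradiction on $\eta$. The main obstacle I anticipate is this case analysis on the position of $\beta(\mathcal{X}_i)$ relative to $x^C$ and the careful use of strict concavity of $\eta(\cdot, \beta(\mathcal{X}_i))$ to convert the weak admissibility inequality into the strict conclusion $\beta(\mathcal{X}_i) < x$; in particular, handling the boundary where $\phi(\beta(\mathcal{X}_i))$ and $x$ coincide requires invoking that $U(x) > U(\beta(\mathcal{X}_i))$ strictly, which rules out the degenerate equality and yields the strict inequality claimed.
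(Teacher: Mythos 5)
There is a genuine gap, and it sits at the heart of your argument: the sign of $\eta\big(x,\beta(\mathcal{X}_i)\big)$ in your first case is the opposite of what you claim. Suppose $x\leq \beta(\mathcal{X}_i)\leq x^C$, so that $\phi\big(\beta(\mathcal{X}_i)\big)\geq \beta(\mathcal{X}_i)\geq x$. The function $\eta\big(\cdot,\beta(\mathcal{X}_i)\big)$ is strictly concave, vanishes at $\beta(\mathcal{X}_i)$, and is maximized at $\phi\big(\beta(\mathcal{X}_i)\big)$; it is therefore increasing everywhere to the left of its maximizer, and since $x\leq\beta(\mathcal{X}_i)\leq\phi\big(\beta(\mathcal{X}_i)\big)$ this gives $\eta\big(x,\beta(\mathcal{X}_i)\big)\leq\eta\big(\beta(\mathcal{X}_i),\beta(\mathcal{X}_i)\big)=0$. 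The region where $\eta\big(\cdot,\beta(\mathcal{X}_i)\big)$ is strictly positive lies strictly to the \emph{right} of $\beta(\mathcal{X}_i)$ in this configuration, not where $x$ is, so admissibility is not violated and your claimed contradiction does not materialize. (A contradiction is in fact available in this case, but it comes from Lemma \ref{lemmapouvoirachat}: $U$ increasing on $[\underline{x},x^C]$ yields $U(x)\leq U\big(\beta(\mathcal{X}_i)\big)$, against $U\big(\beta(\mathcal{X}_i)\big)\leq U(x^*)<U(x)$ --- it does not come from the sign of $\eta$.) Your second case, $\beta(\mathcal{X}_i)>x^C$, is not an argument: admissibility only forces $x\notin\big(\gamma(\beta(\mathcal{X}_i)),\beta(\mathcal{X}_i)\big)$, which is perfectly compatible with $x\leq\gamma\big(\beta(\mathcal{X}_i)\big)\leq\beta(\mathcal{X}_i)$, and Lemma \ref{lemmapouvoirachat} says nothing about the monotonicity of $U$ to the right of $x^C$, so neither ingredient you invoke delivers a contradiction there.

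The missing idea is to contradict the \emph{implementation} condition $U\big(\beta(\mathcal{X}_i)\big)\leq U(x^*)$ rather than admissibility, by exploiting the payoff externality directly. From $\beta(\mathcal{X}_i)\geq x$ and $v_{12}>0$ (which follows from (RC3) together with $u_{12}>0$) one gets $R_F\big(\beta(\mathcal{X}_i)\big)\geq R_F(x)$; then $u_2>0$ gives $u\big(x,R_F(\beta(\mathcal{X}_i))\big)\geq u\big(x,R_F(x)\big)=U(x)>U(x^*)$. Admissibility supplies $U\big(\beta(\mathcal{X}_i)\big)\geq u\big(x,R_F(\beta(\mathcal{X}_i))\big)$, so $U\big(\beta(\mathcal{X}_i)\big)>U(x^*)$, contradicting Lemma \ref{rem:remark1}(ii). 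No case split on the position of $\beta(\mathcal{X}_i)$ relative to $x^C$, and no appeal to the concavity of $\eta$, is needed.
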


\begin{proof} Let  $x \in \mathcal{Q}_{>}(  x^*)$, and 
pick an arbitrary $\mathcal{X}_i \in K$ containing $x$.
Reason by contradiction, and suppose that $\beta(\mathcal{X}_i) \geq x$. Then,  $R_F$ being non-decreasing (since $v_{12}>0$) and $u_2>0$, we obtain 
  \begin{equation}\label{eqspringwater1}
 u\Big(x,R_F\big(\beta(\mathcal{X}_i)\big)\Big)\geq u\big(x,R_F( x)\big)>u\big(x^*,R_F(x^*)\big).
  \end{equation}
 Since  $(K, \beta)$  is admissible, we also have
  \begin{equation}\label{eqspringwater2}
  u\Big( \beta(\mathcal{X}_i), R_F\big(\beta(\mathcal{X}_i) \big)\Big) \geq  u\Big(x,R_F\big(\beta(\mathcal{X}_i)\big)\Big). 
  \end{equation}
  Coupling \eqref{eqspringwater1} and \eqref{eqspringwater2} yields  \begin{equation*}\label{eqspringwater3}
  u\Big( \beta(\mathcal{X}_i), R_F\big(\beta(\mathcal{X}_i) \big)\Big) >
 u\big(x^*,R_F(x^*)\big).
   \end{equation*}
The previous   inequality contradicts  the assumption that 
 $(K, \beta)$ implements $x^*$.
 \end{proof}

\begin{lemma}\label{lemmaemercall2} Suppose (RC1) holds.
Let  
 $(K, \beta)$ be an admissible pair.
If   $\beta(\mathcal{X}_i)< \min\{x^C,x\}$
 for some $\mathcal{X}_i \in K$ which contains $x$, 
    then $\gamma \big( \beta(\mathcal{X}_i) \big) \in \big( \beta(\mathcal{X}_i),  x \big]$.
\end{lemma}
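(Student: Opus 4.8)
The plan is to exploit the strict concavity of $\eta(\cdot,\beta(\mathcal{X}_i))$ together with the admissibility inequality, and to read off $\gamma(\beta(\mathcal{X}_i))$ as the ``second zero'' of this function. Write $b:=\beta(\mathcal{X}_i)$ for brevity, so that the hypotheses read $b<x^C$, $b<x$, and $x\in\mathcal{X}_i$. The first step is to record that $\phi(b)>b$. Under (RC1), $x^C$ is the unique fixed point of the continuous map $\phi$, and since $\phi(\underline{x})\geq\underline{x}$ while $\underline{x}\neq x^C$ forces $\phi(\underline{x})>\underline{x}$, the intermediate value theorem rules out any fixed point in $[\underline{x},x^C)$ and hence yields $\phi(x)>x$ for every $x<x^C$; applying this at $b$ gives $\phi(b)>b$.

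Next I would analyze $\eta(\cdot,b)$. Recall this function is strictly concave, attains its maximum at $\phi(b)$, and satisfies $\eta(b,b)=0$. Since $b<\phi(b)$, it is strictly increasing on $[b,\phi(b)]$, so that $\eta(\tilde{x},b)>0$ for every $\tilde{x}\in(b,\phi(b)]$. On the other hand, admissibility of $(K,\beta)$ together with $x\in\mathcal{X}_i$ gives $\eta(x,b)\leq 0$. Because $x>b$ but $\eta(\cdot,b)>0$ throughout $(b,\phi(b)]$, the point $x$ cannot lie in $(b,\phi(b)]$; hence $x>\phi(b)$.

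Finally I would produce $\gamma(b)$ explicitly. From $\eta(\phi(b),b)>0$, $\eta(x,b)\leq 0$, and $\phi(b)<x$, continuity yields a zero $\tilde{x}\in(\phi(b),x]$ of $\eta(\cdot,b)$; strict concavity makes this the unique zero lying to the right of the maximizer, and in particular $\tilde{x}\neq b$. By the definition of $\gamma$, the unique zero of $\eta(\cdot,b)$ distinct from $b$ is precisely $\gamma(b)$, whence $\gamma(b)=\tilde{x}\in(\phi(b),x]\subseteq(b,x]$, which is the desired conclusion.

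Each step is short, and I expect no genuine difficulty beyond bookkeeping. The one point requiring care is the piecewise definition of $\gamma$: one must confirm that the zero produced by the intermediate value theorem is indeed the distinguished zero $\tilde{x}\neq b$ selected in the definition of $\gamma(b)$, rather than the default value $\overline{x}$ assigned when no such zero exists. This is automatic here, since the argument exhibits an actual zero $\tilde{x}\in(\phi(b),x]$, thereby simultaneously establishing $\gamma(b)>b$ and $\gamma(b)\leq x$.
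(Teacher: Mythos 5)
Your proof is correct and follows essentially the same route as the paper's: establish $\phi\big(\beta(\mathcal{X}_i)\big)>\beta(\mathcal{X}_i)$ from the uniqueness of the Cournot action under (RC1), then combine the admissibility inequality $\eta\big(x,\beta(\mathcal{X}_i)\big)\leq 0$ with the strict concavity of $\eta\big(\cdot,\beta(\mathcal{X}_i)\big)$ and the definition of $\gamma$ to locate $\gamma\big(\beta(\mathcal{X}_i)\big)$ in $\big(\beta(\mathcal{X}_i),x\big]$. You simply unpack in more detail (via the intermediate value theorem) the step the paper compresses into ``by definition of $\gamma$,'' including the check that the zero you exhibit is the distinguished one rather than the default value $\overline{x}$.
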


\begin{proof} Pick 
$x \in \mathcal{X}$, and 
$\mathcal{X}_i \in K$ containing $x$.  
Since 
   $(K, \beta)$ is admissible:
  \begin{equation}\label{eqBTphone2}
  \eta\big( x, \beta(\mathcal{X}_i)\big) \leq 0.
    \end{equation}
Now 
  suppose that 
 $\beta(\mathcal{X}_i)< \min\{x^C,x\}$.
 In this case, 
 the strictly concave  function  
 $\eta\big(\cdot, \beta(\mathcal{X}_i) \big)$ 
 attains (by virtue of   \eqref{ineqrep}) a  maximum   at $\phi\big(\beta(\mathcal{X}_i)\big)> \beta(\mathcal{X}_i)$.  From  \eqref{eqBTphone2} and  the fact that   $\beta(\mathcal{X}_i)<x$ we  obtain (by definition of $\gamma$) 
  $\beta(\mathcal{X}_i)< \gamma\big(\beta(\mathcal{X}_i) \big)\leq x$.
\end{proof}

\begin{proof}[Proof of Theorem \ref{proplast}]
Start with the case  $\mathcal{S}=\{x^C\}$.
Combining    \Cref{cor:UCS_LCS}, \Cref{P:ref1conjecture}, and   
 \Cref{lemmaonyxtabl}  shows 
 that 
  the set of 
simply-plausible  actions,  the set of I-plausible actions, the set of P-plausible actions, and the set of plausible actions all coincide with 
 the upper contour set of $x^C$.

The remainder of the proof 
deals with the case 
$\mathcal{S} \supsetneq\{x^C\}$.
 Below,  assume $u_{12}>0$ and $u_2>0$ (the other cases are analogous).
Recall that in this case 
$\mathcal{S}:=
 \big\{ x:\;  x \leq  \gamma(x) \leq x^C  \big\}$.
 The function 
 $\gamma$ being  continuous,  $\mathcal{S}$ is  a compact set.
 By Lemma \ref{lemmapouvoirachat},
 we can  thus
find 
$\hat{x} \in 
\mathcal{S}$ with $\hat{x}<x^C$ and
\begin{equation}\label{eqcafedisgner}
 U\big(\gamma(\hat{x})\big) = \min_{ x \in \mathcal{S}} U\big( \gamma(x)\big).
\end{equation}
 To shorten notation, let $\hat{\gamma}:=\gamma(\hat{x})$; as $\hat{x}<x^C$, 
note that, by definition of $\gamma$,  
 \begin{equation}\label{eqmarkerred}
     \hat{x}< \hat{\gamma}\leq x^C.
 \end{equation}
 We proceed to show  that (a)
  all actions in 
$\mathcal{Q}_{\geq}(\hat{\gamma})$ are P-plausible, and (b)
any plausible action belongs to 
$\mathcal{Q}_{\geq}(\hat{\gamma})$.

\noindent 
\underline{\textit{All actions in 
$\mathcal{Q}_{\geq}(\hat{\gamma})$   are P-plausible.}} We know by 
 \Cref{cor:UCS_LCS}  that  all actions in 
$\mathcal{Q}_{\geq}(  x^C)$ are  simply plausible.
So pick an action
$x^* \in \mathcal{Q}_{\geq}(\hat{\gamma}) \setminus \mathcal{Q}_{\geq}(  x^C)$ (if there exists none, we are done).
Define  
\[
\mathcal{X}_1:=  \{\hat{x}\} \cup \mathcal{Q}_{>}(x^*),\]
and let  $K$ denote the partition of $\mathcal{X}$ made up of  $
\mathcal{X}_1$, and only singletons besides  $
\mathcal{X}_1$. Lastly, let $\beta: K\rightarrow \mathcal{X}$ be given by $\beta(\mathcal{X}_1)= \hat{x}$ and 
$\beta(\{x\})=x$ for all $ x \in \mathcal{X} \setminus \mathcal{X}_1$.
We now  show that $(K, \beta)$ constitutes an admissible pair; notice that this   amounts to showing that 
\begin{equation}\label{eqlagerloef}
    \eta(\tilde{x}, \hat{x})\leq 0, ~~~
 \text{ for all $  \;\tilde{x} \in  \mathcal{X}_1$}.
 \end{equation}
As $x^* \in \mathcal{Q}_{\geq}(\hat{\gamma})$, 
 any $\tilde{x} \in \mathcal{Q}_{>}(x^*)$  belongs to $\mathcal{Q}_{\geq}(\hat{\gamma})$. On the other hand, 
since
  $\hat{\gamma} \leq x^C$ (see \eqref{eqmarkerred}), 
  Lemma \ref{lemmapouvoirachat} shows that 
  every $\tilde{x} \in \mathcal{Q}_{>}(x^*)$ satisfies  $\tilde{x} \geq \hat{\gamma}$.  Now, 
   the function  
 $\eta(\cdot, \hat{x})$ is 
 strictly quasi-concave, with  $\eta(  \hat{x}, \hat{x})= \eta(\hat{\gamma}, \hat{x})=0$; it thus follows from \eqref{eqmarkerred} 
 that 
 $ \eta(\tilde{x}, \hat{x})\leq 0$ for all 
$\tilde{x} \geq \hat{\gamma}$. Combining the previous observations establishes
\eqref{eqlagerloef}; so  $(K, \beta)$  is admissible.

Finally, 
  coupling \eqref{eqmarkerred} and 
   Lemma \ref{lemmapouvoirachat} yields
$U( \hat{\gamma})>U( \hat{x})$, giving in turn  $U(x^*) >U(\hat{x})= U\big(  \beta(\mathcal{X}_1)\big)$ (since 
$x^* \in \mathcal{Q}_{\geq}(\hat{\gamma})$).
We conclude  that 
  $(K, \beta)$  implements $x^*$, since $\mathcal{X} \setminus \mathcal{X}_1 \subset \mathcal{Q}_{\leq }(x^*)$.

\noindent 
\underline{\textit{All plausible actions belong to 
$\mathcal{Q}_{\geq}(\hat{\gamma})$.}}
 Reason by contradiction, and suppose that some plausible action  $x^*$  belongs to  $\mathcal{Q}_{<}(\hat{\gamma})$.
 Combining   
 \eqref{eqmarkerred}, 
   Lemma \ref{lemmapouvoirachat}, and  the fact that  $U$ is continuous  shows 
that we can  find an action, say  $x^\dagger$,   such that:
 \begin{equation}\label{eqboutcocacola1}
     x^\dagger<\hat{\gamma}, 
      \end{equation}
      and 
\begin{equation}\label{eqboutcocacola2}
    x^\dagger \in \;\mathcal{Q}_{>}(  x^*) \cap \mathcal{Q}_{<}(  \hat{\gamma}).
    \end{equation}

\noindent Now  consider a pair 
  $(K, \beta)$ which implements $x^*$, and $\mathcal{X}_i$ an element of the CST $K$  containing $x^\dagger$. By virtue of \eqref{eqboutcocacola2}, applying
  Lemma \ref{lemmaemercall1} shows that 
   \begin{equation}\label{eqboutcocacola3}
\beta(\mathcal{X}_i)<x^\dagger.
\end{equation}
On the other hand,  \eqref{eqmarkerred} and  \eqref{eqboutcocacola1} 
 show that 
    \begin{equation*}
 x^\dagger<\hat{\gamma} \leq x^C.
\end{equation*} 
 Hence, 
  Lemma \ref{lemmaemercall2} gives
\begin{equation}\label{eqkinder0}
\beta(\mathcal{X}_i) < \gamma\big( \beta(\mathcal{X}_i)\big) \leq x^\dagger< \hat{\gamma} \leq x^C.
\end{equation}
We thus obtain,  firstly, 
\begin{equation}\label{eqkinder1}
\beta(\mathcal{X}_i) \in \mathcal{S},
\end{equation}
and, secondly (using Lemma 
\ref{lemmapouvoirachat}),
\begin{equation}\label{eqkinder2}
U\Big( \gamma\big( \beta(\mathcal{X}_i)\big) \Big)< U(\hat{\gamma}). 
\end{equation}
The combination of \eqref{eqkinder1} and \eqref{eqkinder2}
contradicts \eqref{eqcafedisgner}.  Therefore, every  plausible action must  belong to 
$\mathcal{Q}_{\geq}(\hat{\gamma})$.
\end{proof}

\begin{proof}[Proof of Proposition \ref{remark:linear}]
By definition of $\gamma$:
    $\eta\big( \gamma(x), x  \big)=0$ for all  $x$ in  
 some   neighborhood  $O$  of $x^C$. We thus have
 \begin{equation*}
     u\big(  \gamma(x), R_F(x)  \big)=u\big(  x, R_F(x) \big), ~~~ \forall x\in O.
 \end{equation*}
Differentiating the previous expression  with respect to $x$ yields
\[
u_1\big(   \gamma(x), R_F(x) \big) \gamma'(x) +  u_2\big(   \gamma(x), R_F(x) \big)  R'_F(x)=  u_1\big(   x, R_F(x) \big)  +  u_2\big(   x, R_F(x) \big)  R'_F(x), 
\]
and, therefore, 
\begin{equation}\label{eqinnocent1}
\gamma'(x)=  \frac{ u_1\big(   x, R_F(x) \big)  +R'_F(x)   \big[  u_2\big(   x, R_F(x) \big) -  u_2\big(   \gamma(x), R_F(x) \big)   \big] }{   u_1\big(   \gamma(x), R_F(x) \big)},  ~~~ \forall x\in O  \setminus \{x^C\}.
\end{equation}
The numerator and denominator on the right-hand side of \eqref{eqinnocent1} tend to $0$ as $x \rightarrow x^C$. Then, by virtue  of L'H\^opital's rule  and using the fact that $\gamma(x)\rightarrow x^C$ as $x \rightarrow x^C$:
\begin{equation}\label{eqinnocent2}
\lim\limits_{x \rightarrow x^C}  \gamma'(x)= \lim\limits_{x \rightarrow x^C}
\frac{  u_{11}\big(  x, R_F(x) \big)  +  2  u_{12}\big(  x, R_F(x) \big) R'_F(x)   -    u_{12}\big(  x, R_F(x) \big) R'_F(x) \gamma'(x) }{  u_{11}\big(  \gamma(x), R_F(x) \big) \gamma'(x)  +    u_{12}\big(   \gamma(x), R_F(x) \big)  R'_F(x)}.
\end{equation}
On the other hand, in a neighborhood of $y=y^C$:
\[
R'_L(y)=  \frac{-u_{12}\big( R_L(y),y \big)  }{u_{11}\big( R_L(y),y \big)}.
\]
Therefore, 
\begin{equation}\label{eqinnocent3}
R'_L(y^C)=  \frac{-u_{12}( x^C,y^C )  }{u_{11}( x^C,y^C)}   =   \lim\limits_{x \rightarrow x^C}   \frac{-  u_{12}\big(  x, R_F(x) \big)}{ u_{11}\big(  x, R_F(x) \big)} =   \lim\limits_{x \rightarrow x^C}   \frac{-  u_{12}\big( \gamma( x), R_F(x) \big)}{ u_{11}\big(  \gamma(x), R_F(x) \big)}.
\end{equation}
Combining \eqref{eqinnocent3} with \eqref{eqinnocent2} gives
\[  \gamma'(x^C)=    \frac{1-2  R'_L(y^C)  R'_F(x^C)  + R'_L(y^C)  R'_F(x^C) \gamma'(x^C)  }{\gamma'(x^C) - R'_L(y^C)  R'_F(x^C)}.
\]
So $\gamma'(x^C)$  is a solution of 
\[  Z(Z-2 \alpha)=1-2 \alpha, 
\]
where $\alpha:= R'_L(y^C)  R'_F(x^C)$. So
either 
$\gamma'(x^C)=1$
or
$\gamma'(x^C)=2 \alpha -1$, whence   $\gamma'(x^C)>0$ if   $R_{L}'(y^C)R_F'(x^C)>1/2$.

Now suppose that $u_{12}u_2>0$ (the other case is similar), so that 
$\mathcal{S}= \big\{ x:\;  x \leq  \gamma(x) \leq x^C  \big\}$.
 If
$R_{L}'(y^C)R_F'(x^C)>1/2$, then $\gamma'(x^C)>0$. This in turn implies the existence of $x<x^C$ such that $x<\gamma(x)<x^C$. Such an $x$ belongs to $\mathcal{S}$,  so Lemma \ref{lemmapouvoirachat} enables us to conclude that $\underline{U}<U(x^C)$. 
\end{proof}

\end{document}